\newcommand\cyr{%
\renewcommand\rmdefault{wncyr}%
\renewcommand\sfdefault{wncyss}%
\renewcommand\encodingdefault{OT2}%
\normalfont
\selectfont}
\DeclareTextFontCommand{\textcyr}{\cyr}
\theoremstyle{definition}
\newtheorem{example}{Example}[section]
\newtheorem{definition}{Definition}[section]
\theoremstyle{plain}
\newtheorem{theorem}{Theorem}[section]
\newtheorem{corollary}[theorem]{Corollary}
\begin{document}

\title{Relational models for contingency tables}

\author[*]{Anna Klimova}
\author[**]{Tam\'{a}s Rudas}
\affil[*]{University of Washington, Seattle, WA}
\affil[**]{E\"{o}tv\"{o}s Lor\'{a}nd University, Budapest, Hungary}
\author[*]{Adrian Dobra}

\maketitle

\begin{abstract}
The paper considers general multiplicative models for complete and incomplete contingency tables that generalize log-linear and several other models and are entirely coordinate free. Sufficient conditions of the existence of maximum likelihood estimates under these models are given, and it is shown that the usual equivalence between multinomial and Poisson likelihoods holds if and only if an overall effect is present in the model. If such an effect is not assumed, the model becomes a curved exponential family and a related mixed parameterization is given that relies on non-homogeneous odds ratios. Several examples are presented to illustrate the properties and use of such models.
 \\
{\bf Keywords:} Contingency tables, curved exponential family, exponential family, generalized odds ratios, maximum likelihood estimate, multiplicative model
\end{abstract}

\section*{Introduction}

The main objective of the paper is to develop a new class of models for the set of all strictly positive distributions on contingency tables and on some sets of cells that have a more general structure. The proposed relational models are motivated by traditional log-linear models, quasi models and some other  multiplicative models for discrete distributions that have been discussed in the literature. 

Under log-linear models \citep{BFH}, cell probabilities are determined by multiplicative effects associated with various subsets of the variables in the contingency table. However, some cells may have other characteristics in common, and there always has been interest in models that also allow for multiplicative effects that are associated with those characteristics. Examples, among others, include quasi models \citep{Goodman68,Goodman72}, topological models \citep{Hauser1978, Hout}, indicator models \citep{Zelt1992}, rater agreement-disagreement models \citep{TannerRaters, TannerDisagr}, two-way subtable sum models \citep{HTY09}. All these models, applied in different contexts, have one common idea behind them. A model is generated by a class of subsets of cells, some of which may not be induced by marginals of the table, and, under the model, every cell probability is the product of effects associated with subsets the cell belongs to. This idea is generalized in the relational models framework. 

The outline of the paper is as follows. The definition of a table and the definition of a relational model generated by a class of subsets of cells in the table are given in Section \ref{sec1}. The cells are characterized by strictly positive parameters (probabilities or intensities); a table is a structured set of cells. Under the model, the parameter of each cell is the product of effects associated with the subsets in the generating class, to which the cell belongs. Two examples are given to illustrate this definition. Example \ref{CIex} shows how traditional log-linear models fit into the framework and Example \ref{CrabEx} describes how multiplicative models for incomplete contingency tables are handled. 

The degrees of freedom and the dual representation of relational models are discussed in Section \ref{secDual}. Every relational model can be stated in terms of generalized odds ratios. The minimal number of generalized odds ratios required to specify the model is equal to the number of degrees of freedom in this model. 

The models for probabilities that include the overall effect and all relational models for intensities are regular exponential families. Under known conditions \citep[cf.][]{Barndorff1978}, the maximum likelihood estimates for cell frequencies exist and are unique; the observed values of canonical statistics are equal to their expected values. If the overall effect is not present, a relational model for probabilities forms a curved exponential family. The maximum likelihood estimates in the curved case exist and are unique under the same condition as for regular families; the observed values of canonical statistics are proportional to their expected values. The maximum likelihood estimates for cell frequencies under a model for intensities and under a model for probabilities, when the model matrix is the same, are equal if and only if the model for probabilities is a regular family. These facts are proved in Section \ref{secPoisson}.

A mixed parameterization of finite discrete exponential families is discussed in Section \ref{sec3}. Any relational model is naturally defined under this parameterization: the corresponding generalized odds ratios are fixed and the model is parameterized by remaining mean-value parameters. The distributions of observed values of subset sums and generalized odds ratios are variation independent and, in the regular case, specify the table uniquely. 

Two applications of the framework are presented in Section \ref{SecApplic}. These are the analysis of social mobility data and the analysis of a valued network with given attributes. These two examples suggest that the flexibility of the framework and substantive interpretation of parameters make relational models appealing for many settings.

\section{Definition and Log-linear Representation of Relational Models}\label{sec1}

Let $Y_1, \dots, Y_K$ be the discrete random variables modeling certain characteristics of the population of interest. Denote the domains of the variables by $\mathcal{Y}_1, \dots, \mathcal{Y}_K$ respectively. 
A point $(y_1, y_2,\dots,y_K) \in \mathcal{Y}_1 \times \dots \times \mathcal{Y}_K$ generates a cell if and only if the outcome $(y_1, y_2,\dots,y_K)$ appears in the population. A cell $(y_1, y_2,\dots,y_K)$ is called empty if the combination is not included in the design. 

Let $\mathcal{I}$ denote the lexicographically ordered set of non-empty cells in $\mathcal{Y}_1 \times \dots \times \mathcal{Y}_K$, and $|\mathcal{I}|$ denote the cardinality of $\mathcal{I}$. Since the case, when $\mathcal{I} = \mathcal{Y}_1 \times \dots \times \mathcal{Y}_K$, corresponds to a classical complete contingency table, then the set $\mathcal{I}$ is also called a table. 

Depending on the procedure that generates data on $\mathcal{I}$, the population may be characterized by cell probabilities or cell intensities. The parameters of the true distribution will be denoted by $\boldsymbol \delta =\{\delta(i),\,\,\mbox{for } i \in \mathcal{I}\}$. In the case of probabilities, $\delta(i) = p(i) \in (0,1)$, where $\sum_{i \in \mathcal{I}} p(i) = 1$; in the case of intensities, $\delta(i) = \lambda(i) > 0$. Let $\mathcal{P}$ denote the set of strictly positive  distributions parameterized by $\boldsymbol \delta$.

\begin{definition} \label{DEFrm}
Let $\mathbf{S} = \{S_1, \dots, S_J\}$, be a class of non-empty subsets of the table $\mathcal{I}$, $\mathbf{A}$ a $J \times |\mathcal{I}|$ matrix with entries
\begin{equation} \label{genMMentries}
a_{ji}=\mathbf{I}_{j}(i)=\left\{\begin{array}{l} 1, \mbox{ if  the } i\mbox{-th cell is in } S_j,  \\ 0, \mbox{ otherwise,}  \end{array}\right.
\mbox{ for } i = 1, \dots,|\mathcal{I}| \mbox{ and } \, j = 1,\dots,J .
\end{equation}
\textit{A  relational model} $RM(\mathbf{S})\subseteq \mathcal{P}$ with the model matrix $\mathbf{A}$ is the subset of $\mathcal{P}$ satisfying the equation:
\begin{equation} \label{PMmatr}
\mbox{log } \boldsymbol \delta = \mathbf{A}'\boldsymbol \beta,
\end{equation}
for some $\boldsymbol \beta \in \mathbb{R}^J$.
\end{definition}

Under the model (\ref{PMmatr}) the parameters of the distribution can also be written as 
\begin{equation} \label{PMmatr1}
\delta(i) =\mbox{exp  }\{\sum_{j=1}^J \mathbf{I}_j(i) \beta_j\} = \prod_{j=1}^J (\theta_j)^{\mathbf{I}_j(i)},
\end{equation}
where $\theta_j = \mbox{exp  }(\beta_j)$, for $j =1, \dots, J$. 

The parameters $\boldsymbol \beta$ in (\ref{PMmatr}) are called the log-linear parameters. The parameters $\boldsymbol \theta$ in (\ref{PMmatr1}) are called the multiplicative parameters. If the subsets in $\mathbf{S}$ are cylinder sets, the parameters $\boldsymbol \beta$ coincide with the parameters of the corresponding log-linear model.

In the case $\boldsymbol \delta = \boldsymbol p$ it must be assumed that $\cup_{j=1}^JS_j = \mathcal{I}$, i.e. there are no zero columns in the matrix $\mathbf{A}$. A zero column implies that one of the probabilities is 1 under the model and the model is thus trivial. 

The example below describes a model of conditional independence as a relational model. 
\begin{example} \label{CIex}
Consider the model of conditional independence $[Y_1Y_3][Y_2Y_3]$ of three binary variables $Y_1$, $Y_2$, $Y_3$, each taking values in $\{0,1\}$. The model is expressed as 
$$p_{ijk}=\frac{p_{i+k}p_{+jk}}{p_{++k}},$$
where $p_{i+k}, p_{+jk}, p_{++k}$ are marginal probabilities in the standard notation \citep{BFH}.
Let $\mathbf{S}$ be the class consisting of the cylindrical sets associated with the empty marginal and the marginals $Y_1$, $Y_2$ $Y_3$, $Y_1Y_3$, $Y_2Y_3$. The model matrix computed from (\ref{genMMentries}) is not full row rank and thus the model  parameters are not identifiable (cf. Section \ref{secDual}). A full row rank model matrix can be obtain by setting, for instance, the level $0$ of each variable as the reference level. After that, the model matrix is equal to      
\begin{equation} \label{mmRef} 
\mathbf{A}=\left(
\begin{array} {cccccccc}
1& 1& 1& 1& 1& 1& 1& 1\\
0& 0& 0& 0& 1& 1& 1& 1\\
0& 0& 1& 1& 0& 0& 1& 1\\
0& 1& 0& 1& 0& 1& 0& 1\\
0 &0& 0& 0& 0& 1& 0& 1\\
0& 0& 0& 1& 0& 0& 0& 1
\end{array}\right)
\end{equation} 
The first row corresponds to the cylindrical set associated with the empty marginal. The next three rows correspond to the cylindrical sets generated by the level $1$ of $Y_1$, $Y_2$, $Y_3$ respectively. The fifth row corresponds to the cylindrical set generated by the level $1$ for both $Y_1$ and $Y_3$, and the last row - to the cylindrical set corresponding to the level $1$ for both $Y_2$ and $Y_3$. \qed
\end{example}

In the next example, one of the cells in the Cartesian product of the domains of the variables is empty and the sample space $\mathcal{I}$ is a proper subset of this product. 

\begin{example} \label{CrabEx} 
The study described by \cite{Kawamura1995} compared three bait types for trapping swimming crabs: fish alone, sugarcane alone, and sugarcane-fish combination. During the experiment, catching traps without bait was not considered. Three Poisson random variables are used to model the amount of crabs caught in the three traps.
The notation for the intensities is shown in Table \ref{FishIntens}.
The model assuming that there is a multiplicative effect of using both bait types at the same time will be tested in this paper. The hypothesis of interest is 
\begin{equation}\label{CrabsModelEq}
\lambda_{00}= \lambda_{01} \lambda_{10}.
\end{equation}

The effect can be tested using the relational model for rates on the class $\mathbf{S}$ consisting of two subsets: $\mathbf{S}=\{S_1, S_2\}$, where $S_1 = \{(0,0), (0,1)\}$ and $S_2 = \{(0,0), (1,0)\}$: 
\begin{equation*}
\mbox{log }\boldsymbol \lambda = {\mathbf{A}'}\boldsymbol\beta,
\end{equation*}
Here, the model matrix $$\mathbf{A}=
\left(
\begin{array}{ccc}
1& 1 & 0\\
1& 0 & 1\\
\end{array}
\right),$$ and $\boldsymbol \beta = (\beta_1,\beta_2)'$. The relationship between the two forms of the model will be explored in the next section. \qed
\end{example}

\begin{table}
\centering
\caption{Poisson intensities by bait type.}
\label{FishIntens}
\begin{center}
\begin{tabular}{|c|cc|}
\hline
 &
\multicolumn{2}{|c|}{ \mbox{ Fish}}\\
\cline{2-3}
\mbox{ Sugarcane }& \mbox{Yes}& \mbox{No }\\
\hline
\mbox{ Yes }	& $\lambda_{00}$	& $\lambda_{01}$\\
\mbox{ No } 	& $\lambda_{10}$  &   -  \\ 
\hline
\end{tabular}
\end{center}
\end{table}

\begin{table}[h]
\begin{minipage}[b]{0.5\linewidth}
\centering
\caption{Number of trapped \textit{Charybdis japonica} by bait type.}
\label{Fish2}
\vspace{5mm}
\begin{tabular}{|c|cc|}
\hline
 &
\multicolumn{2}{|c|}{ \mbox{ Fish}}\\
\cline{2-3}
\mbox{ Sugarcane }& \mbox{Yes}& \mbox{No }\\
\hline
\mbox{ Yes }	& 36 	& 2 \\
\mbox{ No } 	& 11  &   -  \\ 
\hline
\end{tabular}
\end{minipage}
\hspace{0.5cm}
\begin{minipage}[b]{0.5\linewidth}
\centering
\caption{Number of trapped \textit{Portunuspelagicus} by bait type.}
\label{Fish1}
\vspace{5mm}
\begin{tabular}{|c|cc|}
\hline
 &
\multicolumn{2}{|c|}{ \mbox{ Fish}}\\
\cline{2-3}
\mbox{ Sugarcane }& \mbox{Yes}& \mbox{No }\\
\hline
\mbox{ Yes }	& 71 	& 3 \\
\mbox{ No } 	& 44  &   -  \\ 
\hline
\end{tabular}
\end{minipage}
\end{table}

\section{Parameterizations and Degrees of Freedom}\label{secDual}

A choice of subsets in $\mathbf{S} = \{S_1, \dots, S_J\}$ is implied by the statistical problem, and the relational model $RM(\mathbf{S})$ can be parameterized with different model matrices, which may be useful depending on substantive meaning of the model. Sometimes a particular choice of subsets leads to a model matrix $\mathbf{A}$ with linearly dependent rows and thus non-identifiable model parameters. To ensure identifiability, a reparameterization, that is sometimes referred to as model matrix coding, is needed. Examples of frequently used codings are reference coding, effects coding, orthogonal coding, polynomial coding \citep[cf.][]{Christensen}. 

Write $R(\mathbf{A})$ for the row space of $\mathbf{A}$ and call it the design space of the model. The elements of $R(\mathbf{A})$ are $|\mathcal{I}|$-dimensional row-vectors and $\boldsymbol 1$ denotes the row-vector with all components equal to $1$. Reparameterizations of the model have form $\boldsymbol\beta = \mathbf{C}\boldsymbol \beta_1$, where $\boldsymbol \beta_1$ are the new parameters of the model and $\mathbf{C}$ is a $J \times [rank (\mathbf{A})]$ matrix such that the modified model matrix $\mathbf{C}'\mathbf{A}$ has a full row rank and $R(\mathbf{A})$ = $R(\mathbf{C}'\mathbf{A})$. 
Then $R(\mathbf{A})^{\bot}$ = $R(\mathbf{C}'\mathbf{A})^{\bot}$, that is $Ker \,(\mathbf{A})$ = $Ker \,(\mathbf{C}'\mathbf{A})$.

Let $\mathcal{P} = \mathcal{P}_{\boldsymbol \delta} = \{P_{\boldsymbol \delta}: \,\, \boldsymbol \delta \in \mathcal{N}\}$ be the set of all positive distributions on the table $\mathcal{I}$. Here the parameter space $\mathcal{N}$ is an open subset of $\mathbb{R}^{|\mathcal{I}|}$. Suppose $\Theta \subset \mathcal{N}$. Then the set $\mathcal{P}_{0} = \{P_{\boldsymbol \delta}: \,\, \boldsymbol \delta \in \Theta \subset \mathcal{N}\}$ is a model in $\mathbf{P}_{\boldsymbol \delta}$. The number of degrees of freedom of the model $\mathcal{P}_{0}$ is the difference between dimensionalities of $\mathcal{N}$ and $\Theta$.

\begin{theorem}
The number of degrees of freedom in a relational model $RM(\mathbf{S})$ is $|\mathcal{I}| - dim R(\mathbf{A})$. 
\end{theorem}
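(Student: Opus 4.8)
The plan is to analyze separately the two cases $\boldsymbol\delta=\boldsymbol\lambda$ (intensities) and $\boldsymbol\delta=\boldsymbol p$ (probabilities), since the ambient parameter space $\mathcal{N}$ has dimension $|\mathcal{I}|$ in the first case and $|\mathcal{I}|-1$ in the second (the probabilities are constrained to sum to $1$). In the intensity case, the model is the image under the coordinatewise exponential map of the linear subspace $\{\mathbf{A}'\boldsymbol\beta:\boldsymbol\beta\in\mathbb{R}^J\}=R(\mathbf{A})'$ of $\mathbb{R}^{|\mathcal{I}|}$. First I would observe that the exponential map is a diffeomorphism from $\mathbb{R}^{|\mathcal{I}|}$ onto the positive orthant $\mathcal{N}$, so it carries the subspace $R(\mathbf{A})'$ onto a smooth manifold $\Theta$ of the same dimension, namely $\dim R(\mathbf{A})$. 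Hence the number of degrees of freedom, defined as $\dim\mathcal{N}-\dim\Theta$, equals $|\mathcal{I}|-\dim R(\mathbf{A})$, as claimed.

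For the probability case one must account for the normalization. Here I would pass to a reparameterization $\boldsymbol\beta=\mathbf{C}\boldsymbol\beta_1$ giving a full-row-rank model matrix $\widetilde{\mathbf{A}}=\mathbf{C}'\mathbf{A}$ with $R(\widetilde{\mathbf{A}})=R(\mathbf{A})$, as described before the theorem; this does not change $\dim R(\mathbf{A})$ or the model $RM(\mathbf{S})$. The key point to exploit is the standing assumption $\cup_{j=1}^J S_j=\mathcal{I}$, equivalently that $\mathbf{A}$ has no zero column, which guarantees that $\boldsymbol 1$ (the all-ones row vector) lies in $R(\mathbf{A})$: indeed the column sums being positive integers means the rows of $\mathbf{A}$ sum, with suitable rational coefficients, to a positive vector, and more cleanly, one argues that $\boldsymbol 1\in R(\mathbf{A})$ whenever every column is nonzero — this is the fact I will want to state carefully. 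Granting $\boldsymbol 1\in R(\mathbf{A})$, the set $\{\mathbf{A}'\boldsymbol\beta\}$ contains the constant vectors, so exponentiating and normalizing sends this subspace onto a manifold $\Theta\subseteq\mathcal{N}$ of dimension $\dim R(\mathbf{A})-1$: the ``$-1$'' is exactly the one direction (the constants) that is washed out by normalization. Since $\dim\mathcal{N}=|\mathcal{I}|-1$, the degrees of freedom are $(|\mathcal{I}|-1)-(\dim R(\mathbf{A})-1)=|\mathcal{I}|-\dim R(\mathbf{A})$, matching the intensity case.

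To make the dimension counts rigorous I would either invoke the fact that a bijective smooth map with smooth inverse preserves dimension of submanifolds, or, more elementarily, note that near any point the log transform is a local diffeomorphism and the model is locally the graph of a linear map, so its dimension is the dimension of the relevant linear space; the degrees of freedom then follow by subtraction. The main obstacle, and the step deserving the most care, is the claim that the no-zero-column hypothesis forces $\boldsymbol 1\in R(\mathbf{A})$ in the probability case — without it the normalization could kill a genuine model direction and the count would be off by one. I would prove this by a short linear-algebra argument: if some $\boldsymbol v\in\mathbb{R}^{|\mathcal{I}|}$ were orthogonal to every row of $\mathbf{A}$ but not to $\boldsymbol 1$, one derives a contradiction with positivity of all column sums is not quite immediate, so instead I would argue directly that the vector of column sums $\boldsymbol 1\mathbf{A}$ (a strictly positive vector) together with the structure of $\{0,1\}$ incidence entries lets one realize $\boldsymbol 1$ as a linear combination of rows; in the coordinate-free restatement one simply posits, as the paper does in the displayed remark following Definition~\ref{DEFrm}, that this assumption is in force, so the argument reduces to the observation that $\boldsymbol 1\in R(\mathbf{A})$ is equivalent to the model being expressible in a normalized form. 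Everything else is the routine diffeomorphism bookkeeping sketched above.
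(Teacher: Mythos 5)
Your intensity case is fine and follows essentially the paper's own argument. The probability case, however, rests on a false lemma: the standing assumption $\cup_{j=1}^J S_j=\mathcal{I}$ (no zero columns in $\mathbf{A}$) does \emph{not} imply $\boldsymbol 1\in R(\mathbf{A})$. Take $\mathcal{I}=\{1,2,3\}$ with $S_1=\{1,2\}$ and $S_2=\{1,3\}$, so that
\[
\mathbf{A}=\left(\begin{array}{ccc}1&1&0\\1&0&1\end{array}\right).
\]
Every column is nonzero, yet $a(1,1,0)+b(1,0,1)=(a+b,a,b)=(1,1,1)$ forces $a=b=1$ and $a+b=1$ simultaneously, a contradiction; this is precisely the model matrix of Example \ref{CrabEx}. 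Probability models with $\boldsymbol 1\notin R(\mathbf{A})$ are not a degenerate corner the paper excludes --- they are the curved exponential families of Theorem \ref{MultAsEF} and Example \ref{ExCalves}, and the degrees-of-freedom count must cover them. Your identification of $\dim\Theta=\dim R(\mathbf{A})-1$ as ``the one direction of constants washed out by normalization'' has no analogue when the constants are not in the design space, so that half of your argument collapses on exactly the class of models for which the theorem is most needed. The closing suggestion that one can simply ``posit'' the overall effect misreads the paper: it posits only the absence of zero columns, which is strictly weaker.

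The repair is the route the paper actually takes: after reducing to a full-row-rank matrix (so $J=\dim R(\mathbf{A})$), parameterize the model by $\boldsymbol\theta\in\mathbb{R}^J_+$ and observe that the normalizing equation $\sum_{i\in\mathcal{I}}\prod_{j=1}^J\theta_j^{\mathbf{I}_j(i)}=1$ cuts out a $(J-1)$-dimensional surface in $\mathbb{R}^J$ (the constraint is regular because each $S_j$ is nonempty, so the gradient has strictly positive components), and this surface maps injectively into $\mathcal{N}$ by full row rank. Hence $\dim\Theta=J-1$ whether or not $\boldsymbol 1\in R(\mathbf{A})$, giving $(|\mathcal{I}|-1)-(J-1)=|\mathcal{I}|-\dim R(\mathbf{A})$ in all cases. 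With that substitution your overall structure (two cases, $\dim\mathcal{N}-\dim\Theta$ bookkeeping via the exponential map) is sound.
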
 

\begin{proof}

Let $\boldsymbol \delta = \boldsymbol p=(p(1), \dots,p(|\mathcal{I}|)'$. Since $\sum_{i \in \mathcal{I}} p(i) = 1$, then the parameter space $\mathcal{N}$ is $|\mathcal{I}|-1$-dimensional. If $RM(\mathbf{S})$ is a relational model for probabilities (\ref{PMmatr1}), its multiplicative parameters $\boldsymbol \theta$ must satisfy the normalizing equation
\begin{equation} \label{NormEqTheta}
\sum_{i \in \mathcal{I}}  \prod_{j=1}^J (\theta_j)^{\mathbf{I}_j(i)} = 1.
\end{equation}
Since the model matrix is full row rank, then the set $\Theta = \{\boldsymbol \theta \in \mathbb{R}^J_+: \,\, \sum_{i \in \mathcal{I}}  \prod_{j=1}^J (\theta_j)^{\mathbf{I}_j(i)} = 1  \}$ is a $J-1$-dimensional surface in $\mathbb{R}^J$. 
Therefore, the number of degrees of freedom of $RM(\mathbf{S})$ is $dim \mathcal{N}-dim \Theta = |\mathcal{I}| -1-(J-1) = |\mathcal{I}| - dim R(\mathbf{A})$.

Let $\boldsymbol \delta = \boldsymbol \lambda$ and $RM(\mathbf{S})$ is a model for intensities. In this case, $\mathcal{N}=\{\boldsymbol \lambda \in \mathbb{R}^{|\mathcal{I}|}_+\} $ and $\Theta \subset \mathcal{N}$ consists of all $\boldsymbol \lambda$ satisfying (\ref{PMmatr1}). Since no normalization is needed,  $dim \mathcal{N} = |\mathcal{I}|$ and $dim \Theta = dim R(\mathbf{A})$ and thence the number of degrees of freedom of $RM(\mathbf{S})$ is equal to $|\mathcal{I}| - dim R(\mathbf{A})$.
\end{proof}

The theorem implies that the number of degrees of freedom of the relational model coincides with $dim\,Ker(\mathbf{A})$. This is in coherence with the fact that the kernel of the model matrix is invariant of reparameterizations of the model (\ref{PMmatr}). To restrict further analysis to models with a positive number of degrees of freedom suppose in the sequel that $Ker(\mathbf{A})$ is non-trivial. Without loss of generality, suppose further that the model matrix is full row rank. 

\begin{definition}
A matrix $\mathbf{D}$ with rows that form a basis of $Ker(\mathbf{A})$ is called \textit{a kernel basis matrix} of the relational model $RM(\mathbf{S})$. 
\end{definition}

The representation (\ref{PMmatr}) is a primal (intuitive) representation of relational models; a dual representation is described in the following theorem.

\begin{theorem} \label{mixedTh}
{\begin{enumerate}[(i)]
\item The distribution, parameterized by $\boldsymbol \delta$, belongs to the relational model $RM(\mathbf{S})$ if and only if 
\begin{equation} \label{rmD}
\mathbf{D} \mbox{log } \boldsymbol \delta = \boldsymbol 0.
\end{equation}
\item The matrix $\mathbf{D}$ may be chosen to have integer entries.
\end{enumerate}}
\end{theorem}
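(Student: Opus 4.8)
The plan is to obtain part~(i) directly from Definition~\ref{DEFrm} together with the elementary fact that, for any matrix, the column space of its transpose and its null space are orthogonal complements. By definition, a positive distribution parameterized by $\boldsymbol\delta\in\mathcal P$ lies in $RM(\mathbf S)$ exactly when $\log\boldsymbol\delta=\mathbf A'\boldsymbol\beta$ for some $\boldsymbol\beta\in\mathbb R^J$, i.e. when the column vector $\log\boldsymbol\delta$ belongs to the column space of $\mathbf A'$; and that column space equals $Ker(\mathbf A)^{\bot}$, since $y\perp\mathbf A'x$ for all $x$ is the same as $\mathbf A y=\boldsymbol 0$.

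For the forward implication I would write $\mathbf D\log\boldsymbol\delta=\mathbf D\mathbf A'\boldsymbol\beta$ and note that $\mathbf A\mathbf D'=\mathbf 0$ because every column of $\mathbf D'$ is a row of $\mathbf D$, hence an element of $Ker(\mathbf A)$; transposing gives $\mathbf D\mathbf A'=\mathbf 0$ and therefore $\mathbf D\log\boldsymbol\delta=\boldsymbol 0$. For the converse, $\mathbf D\log\boldsymbol\delta=\boldsymbol 0$ says that $\log\boldsymbol\delta$ is orthogonal to every row of $\mathbf D$, hence to their span $Ker(\mathbf A)$; thus $\log\boldsymbol\delta\in Ker(\mathbf A)^{\bot}$, so $\log\boldsymbol\delta=\mathbf A'\boldsymbol\beta$ for some $\boldsymbol\beta\in\mathbb R^J$, and since $\boldsymbol\delta$ was assumed to parameterize a distribution in $\mathcal P$ we conclude that this distribution belongs to $RM(\mathbf S)$. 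The only point deserving care is the scope of the equivalence: it is asserted for $\boldsymbol\delta$ ranging over the positive distributions $\mathcal P$, so in the probability case the constraint $\sum_i p(i)=1$ is built into the hypothesis and is not something that must be recovered from $\mathbf D\log\boldsymbol\delta=\boldsymbol 0$.

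For part~(ii) the plan is to invoke standard integer linear algebra. The entries of $\mathbf A$ are $0$ and $1$, hence rational, so Gaussian elimination carried out over $\mathbb Q$ produces a basis of $\{x:\mathbf A x=\boldsymbol 0\}$ consisting of rational vectors; multiplying each such vector by a common denominator turns it into an integer vector while keeping the collection a basis, and these integer vectors may be taken as the rows of $\mathbf D$. The one fact worth making explicit is that rational vectors spanning the null space over $\mathbb Q$ also span it over $\mathbb R$, because $\dim_{\mathbb Q}\{x\in\mathbb Q^{|\mathcal I|}:\mathbf A x=\boldsymbol 0\}=|\mathcal I|-rank(\mathbf A)=\dim_{\mathbb R}Ker(\mathbf A)$. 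I do not expect a genuine obstacle in either part: the content of~(i) is just the identity $R(\mathbf A)^{\bot}=Ker(\mathbf A)$ applied to $\log\boldsymbol\delta$, and the content of~(ii) is that integrality of a null-space basis is a property of rational matrices, of which the zero--one model matrix $\mathbf A$ is one instance.
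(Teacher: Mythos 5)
Your argument for part~(i) is correct and is essentially the paper's own proof: both rest on the identity $R(\mathbf A)^{\bot}=Ker(\mathbf A)$, with the forward direction via $\mathbf D\mathbf A'=\mathbf 0$ and the converse via the fact that the rows of $\mathbf D$ span $Ker(\mathbf A)$, so $\mathbf D\log\boldsymbol\delta=\boldsymbol 0$ forces $\log\boldsymbol\delta\in Ker(\mathbf A)^{\bot}=R(\mathbf A)$. If anything, you make the converse more explicit than the paper does, which states the equivalence in a single displayed line.

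For part~(ii) you take a genuinely different route. The paper invokes Corollary~4.3b of Schrijver to produce a unimodular matrix $\mathbf U$ putting $\mathbf A$ into Hermite normal form $[\mathbf B,\mathbf 0]$, and then sets $\mathbf D'=\mathbf U\mathbf Z$ with $\mathbf Z$ selecting the last $|\mathcal I|-J$ columns; integrality and linear independence of the columns of $\mathbf D'$ follow from $\mathbf U$ being integer and nonsingular. You instead argue that Gaussian elimination over $\mathbb Q$ on the rational (indeed $0$--$1$) matrix $\mathbf A$ yields a rational basis of the null space, which can be scaled by common denominators to an integer basis, together with the (correct and worth stating) observation that a $\mathbb Q$-basis of the rational kernel is also an $\mathbb R$-basis of $Ker(\mathbf A)$ by the dimension count $|\mathcal I|-rank(\mathbf A)$. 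Both arguments are valid for the theorem as stated, which only asks for integer entries. Your version is more elementary and self-contained. What the Hermite normal form construction buys, beyond integrality, is that the rows of $\mathbf D$ form a basis of the lattice $Ker(\mathbf A)\cap\mathbb Z^{|\mathcal I|}$ rather than merely an integer basis of the real kernel; clearing denominators generally produces a sublattice of finite index. That stronger property is not needed for Theorem~\ref{mixedTh}, but it is the natural thing to want when the rows of $\mathbf D$ are later read as exponent vectors of generalized odds ratios, so it is worth being aware that the two constructions are not interchangeable in every respect.
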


\begin{proof} 
\begin{enumerate}[(i)]
\item By the definition of a relational model, 
$$P_{\boldsymbol \delta} \in RM(\mathbf{S}) \,\, \Leftrightarrow \,\, \mbox{log } \boldsymbol \delta  = \mathbf{A}'\boldsymbol \beta.$$ 
The orthogonality of the design space and the null space implies that $\mathbf{A}\mathbf{D}' = \boldsymbol 0$ for any kernel basis matrix $\mathbf{D}$. The rows of $\mathbf{D}$ are linearly independent. Therefore  
$$P_{\boldsymbol \delta} \in RM(\mathbf{S}) \,\, \Leftrightarrow \,\, \mathbf{D} \mbox{log } \boldsymbol \delta = \mathbf{D} \mathbf{A}'\boldsymbol \beta = \boldsymbol 0.$$ 

\item 
Since $\mathbf{A}$ has full row rank, then the dimension of $Ker\, (\mathbf{A})$ is equal to $K_0 = |\mathcal{I}|-J$. 

By Corollary 4.3b \citep[pg. 49]{Schrijver}, there exists a unimodular matrix $\mathbf{U}$, i.e. $\mathbf{U}$ is integer and $det\,\mathbf{U} = \pm 1$, such that $\mathbf{A}\mathbf{U}$ is the Hermite normal form of $\mathbf{A}$, that is 
{\begin{enumerate}[(a)]
\item $\mathbf{A}\mathbf{U}$ has form  $[\mathbf{B}, \mathbf{0}]$,
\item $\mathbf{B}$ is a non-negative, non-singular, lower triangular matrix;
\item $\mathbf{A}\mathbf{U}$ is an $n\times m$ matrix with entries $c_{ij}$ such that  $c_{ij} < c_{ii}$ for all $i=1, \dots,n, \,\,j=1, \dots, m,\,\, i\ne j$.
\end{enumerate}
}

Let $\mathbf{I}_{K_0}$ stand for the $K_0\times K_0 $ identity matrix, $\mathbf{0}$ denote the $J \times K_0$ zero matrix, and  $\mathbf{Z}$ be the following $|\mathcal{I}| \times K_0$ matrix: 
$$\mathbf{Z} = \left(\begin{array}{c}\mathbf{0}\\ \mathbf{I}_{K_0}\end{array}\right).$$

Since the matrix $\mathbf{A}\mathbf{U}$ has form $[\mathbf{B}, \mathbf{0}]$ where $\mathbf{B}$ is the nonsingular, lower triangular, $J \times J$ matrix, then 
$(\mathbf{A}\mathbf{U})\mathbf{Z} = \boldsymbol 0$. 

Set $\mathbf{D}' =\mathbf{U}\mathbf{Z}$. Then
\begin{equation}\label{Ceq}
\mathbf{A}\mathbf{D}' = \mathbf{A}\mathbf{U}\mathbf{Z} = \boldsymbol 0. 
\end{equation}
The matrix $\mathbf{U}$ is integer and nonsingular, the columns of $\mathbf{Z}$ are linearly independent. Therefore the matrix $\mathbf{D}'$ is integer and has linearly independent columns. Hence the matrix $\mathbf{D}$ is an integer kernel basis matrix of the model.
\qedhere
\end{enumerate}
\end{proof}

\textbf{ Example  \ref{CIex} (Revisited)}
{
For the model of conditional independence $dim\,  Ker(\mathbf{A}) = 2$. If the kernel basis matrix is chosen as $$\mathbf{D}=
\left(
\begin{array}{cccccccc}
1&0 &-1&0& -1&0& 1 & 0 \\
0&1&0 &-1&0& -1&0& 1  
\end{array}
\right),
$$
the equation $\mathbf{D} \mbox{log } \boldsymbol p =  \boldsymbol 0$ is equivalent to the following constraints: 
$$\frac{p_{000}p_{110}}{p_{010}p_{100}} =1, \,\,\, \frac{p_{001}p_{111}}{p_{011}p_{101}} = 1.$$
The latter is a well-known representation of the model $[Y_1Y_3][Y_2Y_3]$ in terms of the conditional odds ratios \citep{BFH}.  \qed 
}

The dual representation (\ref{rmD}) of a relational model is, in fact, a model representation in terms of some monomials in $\boldsymbol \delta$. All types of polynomial expressions that may arise in the dual representation of a relational model are captured by the following definition.  
\begin{definition} 
Let $u(i), v(i) \in \mathbb{Z}_{\geq 0}$ for all $i \in \mathcal{I}$, $\boldsymbol \delta ^{\boldsymbol u} = \prod_{i \in \mathcal{I}}\delta(i)^{u(i)}$ and $\boldsymbol \delta ^{\boldsymbol v} = \prod_{i \in \mathcal{I}}\delta(i)^{v(i)}$.
\textit{A generalized odds ratio} for a positive distribution, parameterized by $\boldsymbol \delta$, is a ratio of two monomials: 
\begin{equation} \label{defOR}
\mathcal{OR} = \boldsymbol \delta ^{\boldsymbol u}/\boldsymbol \delta ^{\boldsymbol v}.
\end{equation}
\end{definition} 
The odds ratio $\mathcal{OR} = \frac{\boldsymbol{\delta}^{\boldsymbol u}}{\boldsymbol{\delta}^{\boldsymbol v}}$ is called homogeneous if  $\sum_{i \in \mathcal{I}} u(i) =  \sum_{i \in \mathcal{I}}v(i)$.

To express a relational model $RM(\mathbf{S})$ in terms of generalized odds ratios, write the rows $\boldsymbol{d}_1, \boldsymbol{d}_2, \dots, \boldsymbol{d}_{K_0} \in  \mathbb{Z}^{|\mathcal{I}|}$ of a kernel basis matrix $\mathbf{D}$ in terms of their positive and negative parts:
\begin{equation*}
\boldsymbol{d}_l = \boldsymbol{d}_l^{+} - \boldsymbol{d}_l^{-}, 
\end{equation*}
where $\boldsymbol{d}_l^{+},\, \boldsymbol{d}_l^{-} \geq \boldsymbol 0$ for all $l = 1,2, \dots, K_0$.  
Then the model (\ref{rmD}) takes form
\begin{equation*} 
\boldsymbol{d}_l^{+} \mbox{log }\boldsymbol \delta = \boldsymbol{d}_l^{-}\mbox{log }\boldsymbol \delta, \,\, \mbox{for } l = 1,2, \dots, K_0,
\end{equation*}
which is equivalent to the model representation in terms of generalized odds ratios:
\begin{equation}\label{ORparamz}
\boldsymbol \delta^{\boldsymbol{d}_l^{+}} / \boldsymbol \delta^{\boldsymbol{d}_l^{-}} = 1, \,\, \mbox{for } l = 1,2, \dots, K_0.
\end{equation}
The number of degrees of freedom is equal to the minimal number of generalized odds ratios required to uniquely specify a relational model.

\vspace{7 mm}
\textbf{ Example  \ref{CrabEx}  (Revisited)}
{The model $\lambda_{00} =  \lambda_{01} \lambda_{10}$ can be expressed in the matrix form as:
\begin{equation}
\mathbf{D} \mbox{log } \boldsymbol \lambda = 0,
\end{equation}
where $\mathbf{D}= (1, -1 , -1)$. The matrix $\mathbf{D}$ is a kernel basis matrix of the relational model,
as one would expect. Finally, the model representation in terms of generalized odds ratios is
$$\frac{\lambda_{00}}{\lambda_{01} \lambda_{10}} = 1.$$
\qed 
}
\vspace{5 mm}

The role of generalized odds ratios in parameterizing distributions in $\mathcal{P}$ will be explored in Section \ref{sec3}.

\section{Relational Models as Exponential Families: Poisson vs Multinomial Sampling}\label{secPoisson}

The representation (\ref{PMmatr1}) implies that a relational model is an exponential family of distributions. The canonical parameters of a relational model are $\beta_j$'s and the canonical statistics are indicators of subsets $\mathbf{I}_j$. 
Relational models for intensities and relational models for probabilities are considered in this section in more detail.

Let $RM_{\boldsymbol\lambda}(\mathbf{S})$ denote a relational model for intensities and $RM_{\boldsymbol p}(\mathbf{S})$ denote a relational model for probabilities with the same model matrix $\mathbf{A}$, that has a full rank $J$.   

If the distribution of a random vector $\boldsymbol Y$ is parameterized by intensities $\boldsymbol \lambda$, then, under the model $RM_{\boldsymbol\lambda}(\mathbf{S})$,  
\begin{equation}\label{efP}
P(\boldsymbol{Y}=\boldsymbol{y}) = \frac{1}{\prod_{i \in \mathcal{I}}y(i)!}\mbox{exp }\{ \boldsymbol \beta'\mathbf{A} \boldsymbol y - \boldsymbol 1 \mbox{exp}\mathbf{A}'\boldsymbol \beta \}. 
\end{equation}

If the distribution of $\boldsymbol Y$ is multinomial, with parameters N and $\boldsymbol p$, then, under the model $RM_{\boldsymbol p}(\mathbf{S})$,  
\begin{equation}\label{efMultinomial}
P(\boldsymbol{Y}=\boldsymbol{y}) = \frac{N!}{\prod_{i \in \mathcal{I}}y(i)!}\mbox{exp }\{ \boldsymbol \beta'\mathbf{A} \boldsymbol y \}. 
\end{equation}

Set 
\begin{equation} \label{Suff}
\boldsymbol T(\boldsymbol Y) = \mathbf{A} \boldsymbol Y = (\mathcal{T}_1(\boldsymbol Y), \mathcal{T}_2(\boldsymbol Y),\dots,\mathcal{T}_J(\boldsymbol Y))'. 
\end{equation}
For each $j \in 1, \dots, J$, the statistic $\mathcal{T}_j(\boldsymbol Y) = \sum_{{i}\in\mathcal{I}} \mathbf{I}_j({i})Y({i})$ is the subset sum corresponding to the subset $S_j$.

\begin{theorem}\label{PoissonAsEF}
A model $RM_{\boldsymbol\lambda}(\mathbf{S})$ is a regular exponential family of order $J$.
\end{theorem}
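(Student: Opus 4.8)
The plan is to read the exponential-family structure directly off the probability mass function (\ref{efP}) and then verify the two conditions that make such a family \emph{regular} of order $J$: that the natural (canonical) parameter space is open, and that the displayed representation is minimal with a genuinely $J$-dimensional canonical statistic.

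First I would put (\ref{efP}) into the canonical form $P(\boldsymbol Y = \boldsymbol y) = h(\boldsymbol y)\exp\{\boldsymbol\beta'\boldsymbol T(\boldsymbol y) - \psi(\boldsymbol\beta)\}$, with base measure $h(\boldsymbol y) = 1/\prod_{i\in\mathcal I} y(i)!$ on $\mathbb Z_{\geq 0}^{|\mathcal I|}$, canonical statistic $\boldsymbol T(\boldsymbol y) = \mathbf A\boldsymbol y$ as in (\ref{Suff}), canonical parameter $\boldsymbol\beta$, and cumulant function $\psi(\boldsymbol\beta) = \boldsymbol 1\,\exp(\mathbf A'\boldsymbol\beta) = \sum_{i\in\mathcal I}\prod_{j=1}^J(\theta_j)^{\mathbf I_j(i)}$ in the notation of (\ref{PMmatr1}). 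Since $\mathcal I$ is finite, this sum is finite for every $\boldsymbol\beta\in\mathbb R^J$, so the set on which $\psi$ is finite -- the natural parameter space -- is the whole of $\mathbb R^J$, which is open; this is precisely the condition for the family to be regular in the sense of \citep{Barndorff1978}.

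Next I would check that the order is exactly $J$, i.e. that the representation is minimal. The point to rule out is that no nonzero $\boldsymbol c\in\mathbb R^J$ satisfies $\boldsymbol c'\mathbf A\boldsymbol y = \mathrm{const}$ for all $\boldsymbol y$ in the support: evaluating at $\boldsymbol y=\boldsymbol 0$ and at each unit vector $\boldsymbol y=\boldsymbol e_i\in\mathbb Z_{\geq 0}^{|\mathcal I|}$ forces $\boldsymbol c'\mathbf A_{\cdot i}=0$ for every column of $\mathbf A$, and since $\mathbf A$ has full row rank $J$ its columns span $\mathbb R^J$, whence $\boldsymbol c=\boldsymbol 0$. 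Thus the affine hull of the support of $\mathbf A\boldsymbol Y$ is all of $\mathbb R^J$, so the canonical statistic cannot be replaced by one of lower dimension. Full row rank also makes $\boldsymbol\beta\mapsto\mathbf A'\boldsymbol\beta=\log\boldsymbol\lambda$ injective, so $\boldsymbol\beta$ is identifiable and the natural parameter is genuinely $J$-dimensional; together these give a minimal, hence regular, exponential family of order $J$.

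The argument is essentially bookkeeping around the closed-form pmf (\ref{efP}); the only step needing a little care is the minimality claim, which is where both the independence of the Poisson counts (so that every $\boldsymbol y\in\mathbb Z_{\geq 0}^{|\mathcal I|}$, in particular every $\boldsymbol e_i$, has positive probability) and the full-row-rank hypothesis on $\mathbf A$ are used.
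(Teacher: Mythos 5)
Your proposal is correct and follows essentially the same route as the paper's (much terser) proof: regularity comes from the fact that no normalization is needed for intensities, so the cumulant function is finite on all of $\mathbb{R}^J$ and the natural parameter space is open, while minimality of order $J$ comes from the full row rank of $\mathbf{A}$. You have simply supplied the bookkeeping details (the explicit affine-hull argument on the support of $\mathbf{A}\boldsymbol Y$) that the paper leaves implicit.
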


\begin{proof}
 The model matrix $\mathbf{A}$ has full rank; no normalization is needed for intensities. Therefore, the representation (\ref{efP}) is minimal and the exponential family is regular, of order $J$.
\end{proof}

Relational models for probabilities may have a more complex structure than relational models for 
intensities and, in some cases, become curved exponential families \citep{Efron1975, BrownBook, KassVos}.

\begin{theorem}\label{MultAsEF}
If $\boldsymbol{1} \in R(\mathbf{A})$, a model $RM_{\boldsymbol p}(\mathbf{S})$ is a regular exponential family of order $J- 1$; otherwise, it is a curved exponential family of order $J - 1$. 
\end{theorem}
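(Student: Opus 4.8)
The plan is to analyze the multinomial family (\ref{efMultinomial}) directly and determine when it is minimal and full-rank. Writing the density in exponential-family form, we have $P(\boldsymbol Y = \boldsymbol y) \propto \exp\{\boldsymbol\beta'\mathbf{A}\boldsymbol y\}$ with canonical statistic $\boldsymbol T(\boldsymbol Y) = \mathbf{A}\boldsymbol Y$ and canonical parameter $\boldsymbol\beta$; however, because $\boldsymbol Y$ lives on the simplex slice $\sum_{i} Y(i) = N$, there is an affine constraint among the components of $\boldsymbol Y$, hence a potential affine dependence among the canonical statistics. I would make precise the standard dichotomy: such an exponential family has order (dimension) $J$ unless the statistics $\mathcal T_1(\boldsymbol Y),\dots,\mathcal T_J(\boldsymbol Y)$ satisfy a nontrivial affine relation on the support, in which case the minimal representation drops the dimension.

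First I would observe that the only possible affine relation among the $\mathcal T_j$ on the set $\{\boldsymbol y: \sum_i y(i) = N\}$ comes from $\boldsymbol 1$: indeed $\sum_i Y(i) = N$ is the unique (up to scalar) affine constraint on $\boldsymbol Y$, so a relation $\sum_j c_j \mathcal T_j(\boldsymbol Y) = \text{const}$ holds identically on the support if and only if $\sum_j c_j \mathbf{I}_j = \text{const}\cdot\boldsymbol 1$ as a vector in $\mathbb{R}^{|\mathcal I|}$, i.e. if and only if $\boldsymbol 1 \in R(\mathbf A)$. I would split into the two cases. If $\boldsymbol 1 \in R(\mathbf A)$, pick a reparameterization (model matrix coding, as in Section \ref{secDual}) so that the first row of the new model matrix is $\boldsymbol 1$; then $\mathcal T_1 \equiv N$ is constant, the normalizing constant absorbs $\beta_1$, and the remaining $J-1$ statistics together with the $J-1$ parameters give a minimal representation. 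One then checks the representation is minimal (no further affine dependence, by the full-rank assumption on $\mathbf A$) and that the natural parameter space is open, so the family is a regular exponential family of order $J-1$. If $\boldsymbol 1 \notin R(\mathbf A)$, the $J$ statistics $\mathcal T_j$ satisfy no affine relation on the support, so (\ref{efMultinomial}) is already a minimal representation of order $J$ as a family over the $|\mathcal I|$-dimensional ambient space — but the multinomial constraint $\sum_i p(i)=1$, combined with $\log\boldsymbol p = \mathbf A'\boldsymbol\beta$, forces the parameter $\boldsymbol\beta$ to satisfy the one nonlinear normalizing equation (\ref{NormEqTheta}), so $\boldsymbol\beta$ ranges over a $(J-1)$-dimensional smooth surface sitting inside the $J$-dimensional canonical parameter space of the ambient regular family. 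That is exactly the definition of a curved exponential family of dimension $J-1$; I would cite \citep{Efron1975, KassVos} for the notion and verify the surface is a genuine $(J-1)$-manifold (the normalizing map is smooth with nonvanishing gradient, since its gradient is a vector of positive expected subset sums and cannot be zero).

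The main obstacle I expect is the curved case: one must argue carefully that the set $\Theta = \{\boldsymbol\theta\in\mathbb R^J_+: \sum_{i}\prod_j\theta_j^{\mathbf I_j(i)} = 1\}$ is a smooth embedded $(J-1)$-dimensional submanifold of the natural parameter space and that the embedding into the ambient order-$J$ family is the one required by the definition of a curved exponential family (in particular that the ambient family restricted to $\Theta$ really does reproduce $RM_{\boldsymbol p}(\mathbf S)$, and that $\Theta$ is not itself an affine subspace — which is where $\boldsymbol 1 \notin R(\mathbf A)$ is used, since if $\boldsymbol 1$ were in the row space the constraint would become linear in $\boldsymbol\beta$ and the family would flatten back to a regular one). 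The regular case is comparatively routine once the reparameterization setting the overall effect row to $\boldsymbol 1$ is in place, and reduces to checking minimality and openness of the parameter space.
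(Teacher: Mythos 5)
Your proposal is correct and follows essentially the same route as the paper's proof: in the regular case, reparameterize so that a row of the model matrix equals $\boldsymbol 1$, making the corresponding statistic constant and absorbable into the normalizing constant, which yields a minimal representation of order $J-1$; in the curved case, observe that the normalization constraint cuts out a $(J-1)$-dimensional manifold in the $J$-dimensional canonical parameter space. You supply more detail than the paper does (the characterization of affine dependencies among the subset sums on the support, and the nonvanishing-gradient argument showing $\Theta$ is a genuine $(J-1)$-manifold, where the paper simply cites Rudin and Kass--Vos), but the underlying argument is the same.
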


\begin{proof}

Suppose that $\boldsymbol{1} \in R(\mathbf{A})$. Without loss of generality, $\mathcal{I} = S_1 \in \mathbf{S}$ and thus 
\begin{equation}\label{pfEF1}
P(\boldsymbol Y  = \boldsymbol y) = \frac{N!}{\prod_{i \in \mathcal{I}}y(i)!}  \mbox{exp } \{N \beta_1 +
\sum_{j=2}^J (\sum _{i \in \mathcal{I}}y(i)\mathbf{I}_j(i)) \beta_j\}.
\end{equation}
The exponential family representation given by (\ref{pfEF1}) is minimal; the model $RM_{\boldsymbol p}(\mathbf{S})$ is a regular exponential family of order $J- 1$.

If $\boldsymbol{1} \notin R(\mathbf{A})$ then, independent of parameterization, the model matrix does not include the row of all $1$s. The normalization is required and thus the parameter space is a manifold of the dimension $J-1$ in $\mathbb{R}^J$ \citep[see e.g.][p.229]{Rudin}. In this case, $RM_{\boldsymbol p}(\mathbf{S})$ is a curved exponential family of the order $J-1$ \citep{KassVos}.
\end{proof}

If a relational model is a regular exponential family, the maximum likelihood estimate of the canonical parameter exists if and only if the observed value of the canonical statistic is contained in the interior of the convex hull of the support of its  distribution \citep{BarndorffCox1994}. In this case, the MLE is also unique.

It is well known for log-linear models that, when the total sample size is fixed, the kernel of the likelihood is the same for the multinomial and Poisson sampling scheme and thus the maximum likelihood estimates of the cell frequencies, obtained under either sampling scheme, are equal \citep[see e.g.][p.448]{BFH}. The following theorem is an extension of this result.

\begin{theorem} \label{ThPoissonMequiv}
Assume that, for a given set of observations, the maximum likelihood estimates $\hat{\boldsymbol \lambda}$, under the model $RM_{\boldsymbol \lambda}(\mathbf{S})$, and $\hat{\boldsymbol p}$, under the model $RM_{\boldsymbol p}(\mathbf{S})$, exist. 
The following four conditions are equivalent:
{\begin{enumerate}[(A)]
\item \label{A} The MLEs for cell frequencies obtained under either model are the same.
\item \label{B} Vector $\boldsymbol 1$ is in the design space $R(\mathbf{A})$.
\item \label{C} Both models may be defined by homogeneous odds ratios.
\item \label{D} The model for intensities is scale invariant.
\end{enumerate}}
\end{theorem}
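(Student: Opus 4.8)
The plan is to take condition (\ref{B}) as the hub and prove the three equivalences (\ref{B})$\Leftrightarrow$(\ref{C}), (\ref{B})$\Leftrightarrow$(\ref{D}), and (\ref{A})$\Leftrightarrow$(\ref{B}), using throughout the identification of the design space $R(\mathbf{A})$ with the set of attainable vectors $\mathbf{A}'\boldsymbol\beta$, and the relation $R(\mathbf{A})^{\perp}=Ker(\mathbf{A})$ noted in Section~\ref{secDual}.

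Two of the equivalences are short linear algebra once the right reformulation is in place. For (\ref{B})$\Leftrightarrow$(\ref{D}): $RM_{\boldsymbol\lambda}(\mathbf{S})$ consists exactly of the positive $\boldsymbol\lambda$ with $\log\boldsymbol\lambda\in R(\mathbf{A})$, and $\log(c\boldsymbol\lambda)=\log\boldsymbol\lambda+(\log c)\boldsymbol 1$; hence the model is closed under multiplication by every $c>0$ if and only if $(\log c)\boldsymbol 1\in R(\mathbf{A})$ for all such $c$, i.e.\ if and only if $\boldsymbol 1\in R(\mathbf{A})$. For (\ref{B})$\Leftrightarrow$(\ref{C}): by Theorem~\ref{mixedTh} and the discussion around (\ref{ORparamz}) the odds ratios specifying the model are homogeneous precisely when the rows of a kernel basis matrix $\mathbf{D}$ are orthogonal to $\boldsymbol 1$; since those rows span $Ker(\mathbf{A})$, such a $\mathbf{D}$ exists if and only if $\boldsymbol 1\perp Ker(\mathbf{A})$, i.e.\ if and only if $\boldsymbol 1\in Ker(\mathbf{A})^{\perp}=R(\mathbf{A})$, and in that case every kernel basis matrix has the property and the same $\mathbf{D}$ serves both $RM_{\boldsymbol\lambda}(\mathbf{S})$ and $RM_{\boldsymbol p}(\mathbf{S})$, so demanding homogeneity for \emph{both} models adds nothing.

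For (\ref{B})$\Rightarrow$(\ref{A}): assume $\boldsymbol 1\in R(\mathbf{A})$, so by Theorem~\ref{MultAsEF} the model $RM_{\boldsymbol p}(\mathbf{S})$ is a regular exponential family; together with Theorem~\ref{PoissonAsEF}, the existence hypothesis, and the standard characterization \citep{BarndorffCox1994}, both MLEs are unique and are the unique points of the respective models that equate observed and expected canonical statistics, namely $\mathbf{A}\hat{\boldsymbol\lambda}=\mathbf{A}\boldsymbol y$ and $N\mathbf{A}\hat{\boldsymbol p}=\mathbf{A}\boldsymbol y$. Writing $\boldsymbol 1=\boldsymbol c'\mathbf{A}$ yields $\sum_i\hat\lambda(i)=\boldsymbol c'\mathbf{A}\hat{\boldsymbol\lambda}=\boldsymbol c'\mathbf{A}\boldsymbol y=N$, so $\hat{\boldsymbol\lambda}/N$ is a probability vector; moreover $\log(\hat{\boldsymbol\lambda}/N)=\log\hat{\boldsymbol\lambda}-(\log N)\boldsymbol 1\in R(\mathbf{A})$, so $\hat{\boldsymbol\lambda}/N\in RM_{\boldsymbol p}(\mathbf{S})$, and $N\mathbf{A}(\hat{\boldsymbol\lambda}/N)=\mathbf{A}\hat{\boldsymbol\lambda}=\mathbf{A}\boldsymbol y$; by uniqueness $\hat{\boldsymbol p}=\hat{\boldsymbol\lambda}/N$, hence the fitted cell frequencies $\hat{\boldsymbol\lambda}$ (Poisson) and $N\hat{\boldsymbol p}$ (multinomial) coincide. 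This is the relational-model analogue of the classical log-linear argument. For (\ref{A})$\Rightarrow$(\ref{B}): if $\hat{\boldsymbol\lambda}=N\hat{\boldsymbol p}$ then, since $\log\hat{\boldsymbol\lambda}$ and $\log\hat{\boldsymbol p}$ both lie in the subspace $R(\mathbf{A})$, so does their difference $(\log N)\boldsymbol 1$; as $N\ge 2$ in any nondegenerate setting, $\log N\ne 0$ and therefore $\boldsymbol 1\in R(\mathbf{A})$.

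The main obstacle I anticipate is the bookkeeping in (\ref{B})$\Rightarrow$(\ref{A}): correctly invoking that, in the regular case, the MLE is the \emph{unique} model point equating observed and expected canonical statistics, and verifying that the candidate $\hat{\boldsymbol\lambda}/N$ both belongs to $RM_{\boldsymbol p}(\mathbf{S})$ and solves the multinomial moment equations; one must also be mildly careful that the coordinate giving $\sum_i\hat\lambda(i)=N$ is read off correctly (equivalently, one may reparameterize so that $\mathcal{I}\in\mathbf{S}$). Everything else reduces to deciding whether $\boldsymbol 1$ belongs to the design space.
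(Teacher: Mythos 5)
Your proof is correct, and its overall architecture coincides with the paper's: both use (\ref{B}) as the hub, and your arguments for (\ref{A})$\Rightarrow$(\ref{B}) (the difference $\log\hat{\boldsymbol\lambda}-\log(N\hat{\boldsymbol p})=(\log N)\boldsymbol 1$ must lie in the subspace $R(\mathbf{A})$), for (\ref{B})$\Leftrightarrow$(\ref{C}) (row sums of $\mathbf{D}$ vanish iff $\boldsymbol 1\perp Ker(\mathbf{A})$), and for (\ref{B})$\Leftrightarrow$(\ref{D}) (yours via the primal form $\log(c\boldsymbol\lambda)=\log\boldsymbol\lambda+(\log c)\boldsymbol 1$, the paper's via the dual form $\mathbf{D}\log(t\boldsymbol\lambda)=(\log t)\mathbf{D}\boldsymbol 1'$) are essentially identical to the paper's. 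The one place you genuinely diverge is (\ref{B})$\Rightarrow$(\ref{A}). The paper writes down the Lagrangian likelihood equations $\mathbf{A}\boldsymbol y=\alpha\mathbf{A}\hat{\boldsymbol p}$, uses $\boldsymbol k'\mathbf{A}=\boldsymbol 1$ to identify $\alpha=N$, and then argues from $\hat{\boldsymbol\lambda}-N\hat{\boldsymbol p}\in Ker(\mathbf{A})$; you instead invoke the characterization of the MLE of a regular exponential family as the \emph{unique} model point matching the observed canonical statistics and exhibit $\hat{\boldsymbol\lambda}/N$ as an explicit candidate lying in $RM_{\boldsymbol p}(\mathbf{S})$ and solving the multinomial moment equations. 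Your version is arguably tighter at the final step: the paper's jump from $\hat{\boldsymbol\lambda}-N\hat{\boldsymbol p}\in Ker(\mathbf{A})$ to $\hat{\boldsymbol p}=\hat{\boldsymbol\lambda}/(\boldsymbol 1\hat{\boldsymbol\lambda})$ tacitly uses the injectivity of the mean-value map on the model, which is exactly the uniqueness you make explicit. What the paper's Lagrange-multiplier route buys in exchange is that the same computation, read off when $\boldsymbol 1\notin R(\mathbf{A})$ and $\alpha$ cannot be pinned down, immediately yields Corollaries \ref{CorEq} and \ref{CorProp} (subset sums reproduced exactly versus only up to proportionality), which your argument would not give for free. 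Two minor points you already flag yourself and should keep flagged: the reduction to a minimal representation with $\mathcal{I}\in\mathbf{S}$ before reading off the multinomial moment equations, and the degenerate case $N=1$ in (\ref{A})$\Rightarrow$(\ref{B}) (the paper is silently subject to the same caveat).
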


\begin{proof}

(\ref{A}) $\Longleftarrow$ (\ref{B})

The maximum likelihood estimates for probabilities, under the model $RM_{\boldsymbol p}(\mathbf{S})$, satisfy the likelihood equations 
\begin{eqnarray} \label{MultLkE}
\mathbf{A}\boldsymbol y &=& {\alpha} \mathbf{A}\hat{\boldsymbol p}\\
\boldsymbol 1\hat{\boldsymbol p} &=& 1. \nonumber
\end{eqnarray}
Here $\alpha$ is the Lagrange multiplier. 

If $\boldsymbol 1 \in R(\mathbf{A})$ then there exists a $\boldsymbol k \in \mathbb{R}^J$ such that $\boldsymbol k' \mathbf{A}=  \boldsymbol 1$. Multiplying both sides of the first equation in (\ref{MultLkE}) by $\boldsymbol k'$ yields ${\alpha} = N$ and hence
\begin{equation}\label{LikeEqalpha41}
\mathbf{A} \boldsymbol y = N \mathbf{A}\hat{\boldsymbol p}.
\end{equation}

The maximum likelihood estimates for intensities, under $RM_{\boldsymbol \lambda}(\mathbf{S})$, satisfy the likelihood equations 
\begin{eqnarray}\label{LikeP41}
\mathbf{A}\boldsymbol y = \mathbf{A}\hat{\boldsymbol \lambda}.
\end{eqnarray}

From the equations (\ref{LikeEqalpha41}) and (\ref{LikeP41}):
$$\hat{\boldsymbol \lambda} - N\hat{\boldsymbol p} \in Ker \, \mathbf{A}.$$
The latter implies that $\boldsymbol 1 (\hat{\boldsymbol \lambda} - N\hat{\boldsymbol p}) = 0$ and $N = \boldsymbol 1 \hat{\boldsymbol \lambda}$. Therefore  
$$\hat{\boldsymbol p} = \frac{\hat{\boldsymbol \lambda}}{\boldsymbol 1 \hat{\boldsymbol \lambda}}$$
and the maximum likelihood estimates for cell frequencies obtained under either model are the same:
$$\hat{\boldsymbol y} = N\hat{\boldsymbol p} = \hat{\boldsymbol \lambda}.$$

(\ref{A}) $\Longrightarrow$ (\ref{B})

Suppose that $\hat{\boldsymbol y} = N\hat{\boldsymbol p} = \hat{\boldsymbol \lambda}$. Under the model $RM_{\boldsymbol \lambda}(\mathbf{S})$
$$\mbox{log } (\hat{\boldsymbol \lambda}) = \mathbf{A}'\hat{\boldsymbol \beta}_1$$
for some $\hat{\boldsymbol \beta}_1$.
On the other hand, under the model $RM_{\boldsymbol p}(\mathbf{S})$,
$$\mbox{log } (\hat{\boldsymbol \lambda}) = \mbox{log } (N\hat{\boldsymbol p}) = \mathbf{A}'\hat{\boldsymbol \beta}_2 + \mbox{log }N\boldsymbol 1'$$
for some $\hat{\boldsymbol \beta}_2$. 
The condition $\mathbf{A}'\hat{\boldsymbol\beta}_1 = \mathbf{A}'\hat{\boldsymbol\beta}_2 + \mbox{log }N\boldsymbol 1'$
can only hold if $\boldsymbol 1 \in R(\mathbf{A})$.

(\ref{B}) $\Longleftrightarrow$ (\ref{C})

The vector $\boldsymbol 1 \in R(\mathbf{A})$ if and only if all rows of a kernel basis matrix $\mathbf{D}$ are orthogonal to  $\boldsymbol 1$ and the sum of entries in every row of $\mathbf{D}$ is zero. The latter is equivalent to the generalized odds ratios obtained from rows of $\mathbf{D}$ being homogeneous. 

(\ref{D}) $\Longleftrightarrow$ (\ref{B})

Let $t > 0, \, t \ne 1$. 
\begin{eqnarray*}
\mathbf{D} \mbox{log } (t\boldsymbol{\lambda}) = \boldsymbol 0 \Longleftrightarrow 
\mbox{log } t \cdot (\mathbf{D} \boldsymbol 1') = \boldsymbol 0 \Longleftrightarrow  
\mathbf{D} \boldsymbol 1' = \boldsymbol 0, \mbox{ or } \boldsymbol 1 \in R(\mathbf{A}). 
\end{eqnarray*}

\end{proof}

\begin{corollary}\label{CorEq}
For a given set of observations, the MLEs of the subset sums under a model $RM_{\boldsymbol p}(\mathbf{S})$ are equal to their observed values if and only if $\boldsymbol 1 \in R(\mathbf{A})$.
\end{corollary}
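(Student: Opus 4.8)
The plan is to reduce the statement to a condition on the Lagrange multiplier in the multinomial likelihood equations and then to reuse the computation from the proof of Theorem~\ref{ThPoissonMequiv}. First I would observe that under $RM_{\boldsymbol p}(\mathbf{S})$ the MLE of the $j$-th subset sum $\mathcal{T}_j$ is the fitted expected count $N(\mathbf{A}\hat{\boldsymbol p})_j$, whereas its observed value is $(\mathbf{A}\boldsymbol y)_j$, so the claim is that $N\mathbf{A}\hat{\boldsymbol p}=\mathbf{A}\boldsymbol y$ if and only if $\boldsymbol 1\in R(\mathbf{A})$. The likelihood equations (\ref{MultLkE}) give $\mathbf{A}\boldsymbol y=\alpha\,\mathbf{A}\hat{\boldsymbol p}$, and every component of $\mathbf{A}\hat{\boldsymbol p}$ equals $\sum_{i\in S_j}\hat p(i)>0$ because $\hat{\boldsymbol p}>\boldsymbol 0$ and each $S_j$ is non-empty; hence $\mathbf{A}\hat{\boldsymbol p}$ has no zero coordinate and $N\mathbf{A}\hat{\boldsymbol p}=\mathbf{A}\boldsymbol y$ holds exactly when $\alpha=N$. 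So it suffices to prove $\alpha=N\iff\boldsymbol 1\in R(\mathbf{A})$.

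For the implication $\boldsymbol 1\in R(\mathbf{A})\Rightarrow\alpha=N$ I would copy the step used in the proof of Theorem~\ref{ThPoissonMequiv}, direction (\ref{A})$\Longleftarrow$(\ref{B}): choose $\boldsymbol k\in\mathbb{R}^J$ with $\boldsymbol k'\mathbf{A}=\boldsymbol 1$ and left-multiply $\mathbf{A}\boldsymbol y=\alpha\,\mathbf{A}\hat{\boldsymbol p}$ by $\boldsymbol k'$; since $\boldsymbol 1\boldsymbol y=N$ and $\boldsymbol 1\hat{\boldsymbol p}=1$, this gives $\alpha=N$, and therefore $N\mathbf{A}\hat{\boldsymbol p}=\mathbf{A}\boldsymbol y$, i.e. the fitted subset sums agree with the observed ones. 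This half is routine.

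The converse is where the work lies. Starting from $\alpha=N$, the multinomial likelihood equations become $\mathbf{A}(N\hat{\boldsymbol p})=\mathbf{A}\boldsymbol y$; combining this with the intensity likelihood equations (\ref{LikeP41}) $\mathbf{A}\hat{\boldsymbol\lambda}=\mathbf{A}\boldsymbol y$ gives $N\hat{\boldsymbol p}-\hat{\boldsymbol\lambda}\in Ker(\mathbf{A})$, while $\log(N\hat{\boldsymbol p})=(\log N)\boldsymbol 1+\mathbf{A}'\hat{\boldsymbol\beta}$ with $\hat{\boldsymbol\beta}$ the log-linear parameter of $\hat{\boldsymbol p}$. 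The main obstacle is exactly this: membership in $Ker(\mathbf{A})$ is weaker than equality, so one cannot directly invoke Theorem~\ref{ThPoissonMequiv}, direction (\ref{A})$\Rightarrow$(\ref{B}) (which requires $N\hat{\boldsymbol p}=\hat{\boldsymbol\lambda}$, hence $(\log N)\boldsymbol 1\in R(\mathbf{A}')$), because $N\hat{\boldsymbol p}$ need not lie in $RM_{\boldsymbol\lambda}(\mathbf{S})$ and may differ from $\hat{\boldsymbol\lambda}$ by a nonzero element of $Ker(\mathbf{A})$. I would therefore prove the contrapositive at the level of the model rather than of a single data set: if $\boldsymbol 1\notin R(\mathbf{A})$ then $\boldsymbol 1$ is not orthogonal to $Ker(\mathbf{A})$, the total $N=\boldsymbol 1\boldsymbol y$ is not recoverable from $\mathbf{A}\boldsymbol y$, and the normalized subset-sum vector $\mathbf{A}\boldsymbol y/N$ ranges over a $J$-dimensional open set while $\mathbf{A}(RM_{\boldsymbol p}(\mathbf{S}))$ is at most $(J-1)$-dimensional; since $\hat{\boldsymbol p}$ depends on $\boldsymbol y$ only through the direction of $\mathbf{A}\boldsymbol y$, the fixed-point condition $\mathbf{A}\boldsymbol y/N=\mathbf{A}\hat{\boldsymbol p}$ (equivalently $\alpha=N$) then fails for all but a lower-dimensional set of admissible data, which establishes that agreement of fitted and observed subset sums forces $\boldsymbol 1\in R(\mathbf{A})$.
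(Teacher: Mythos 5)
Your reduction to the Lagrange multiplier and your forward implication are correct. Observing that $\mathbf{A}\hat{\boldsymbol p}$ has strictly positive entries, so that $N\mathbf{A}\hat{\boldsymbol p}=\mathbf{A}\boldsymbol y$ is equivalent to $\alpha=N$, is clean, and the computation with $\boldsymbol k'\mathbf{A}=\boldsymbol 1$ is exactly the one used in the proof of Theorem~\ref{ThPoissonMequiv}. The paper proves this half differently: when $\boldsymbol 1\in R(\mathbf{A})$ the model is a regular exponential family (Theorem~\ref{MultAsEF}) and the subset sums are its canonical statistics, so their MLEs equal their observed values. Your route is equally valid and more self-contained.

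The converse is where your argument does not close, and the gap is not merely one of missing detail. A genericity statement (``the fixed-point condition fails for all but a lower-dimensional set of admissible data'') is strictly weaker than the claim being proved, which is an implication for a \emph{given} set of observations: you would still have to rule out that the particular $\boldsymbol y$ at hand lies in the exceptional set, and you cannot. Whenever the observed relative frequencies $\boldsymbol y/N$ themselves satisfy the model, the MLE is $\hat{\boldsymbol p}=\boldsymbol y/N$ and the fitted subset sums equal the observed ones regardless of whether $\boldsymbol 1\in R(\mathbf{A})$. This is realizable with integer data: in Example~\ref{ExCalves} (to which the paper extends these results), take $(n_{11},n_{12},n_{22})=(4,2,3)$, so $T_1=10$, $T_2=5$, $\hat\pi=2/3$, the fitted frequencies are $(N\hat\pi^2,\,N\hat\pi(1-\hat\pi),\,N(1-\hat\pi))=(4,2,3)$, and the correction factor $N(2T_1+T_2)/(T_1+T_2)^2$ equals $1$, although $\boldsymbol 1\notin R(\mathbf{A})$. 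So your dimension count, even carried out rigorously, can only establish the equivalence off a null set of data. For comparison, the paper's converse is a two-line argument: equality of the subset sums gives $\boldsymbol y/N-\hat{\boldsymbol p}\in Ker(\mathbf{A})$, and since both vectors sum to one this difference is orthogonal to $\boldsymbol 1$, from which the paper concludes $\boldsymbol 1\in R(\mathbf{A})$; note that this exhibits only a single kernel vector orthogonal to $\boldsymbol 1$ (which is in fact the zero vector in the example above), so your instinct that the converse ``is where the work lies'' is well founded --- but a generic-data argument presented as if it proved the pointwise claim is not a repair.
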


\begin{proof}
If $\boldsymbol 1 \in R(\mathbf{A})$ the model $RM_{\boldsymbol p}(\mathbf{S})$ is a regular exponential family. The subset sums are canonical statistics; their MLEs are the same as observed.
 
Suppose that the MLEs of the subset sums are equal to their observed values. Then $N\mathbf{A} \boldsymbol{p} = N\mathbf{A} \boldsymbol {\hat{p}}$ and thus $\boldsymbol{p} - \hat{\boldsymbol{p}} \in Ker \mathbf{A}$.
Since $\sum(p(i) - \hat{p}(i)) = \sum p(i) - \sum\hat{p}(i) = 1 - 1 = 0$, vector $\boldsymbol{p} - \hat{\boldsymbol{p}}$ is orthogonal to $\boldsymbol 1$ and thus $\boldsymbol 1 \in R(\mathbf{A})$. 
\end{proof}

\begin{corollary}\label{CorProp}
Suppose $\boldsymbol 1 \notin R(\mathbf{A})$. For a given set of observations, the MLEs, if exist, of the subset sums under a model $RM_{\boldsymbol p}(\mathbf{S})$ are proportional to their observed values.
\end{corollary}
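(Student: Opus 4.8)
The plan is to proceed exactly as in the proof of Theorem~\ref{ThPoissonMequiv}: write down the likelihood equations for $RM_{\boldsymbol p}(\mathbf{S})$ and then simply read off the proportionality, the only new ingredient being the observation that when $\boldsymbol 1\notin R(\mathbf{A})$ the Lagrange multiplier need not equal $N$. First I would recall that, up to an additive constant not depending on the parameters, the multinomial log-likelihood under $RM_{\boldsymbol p}(\mathbf{S})$ is $\boldsymbol\beta'\mathbf{A}\boldsymbol y$, to be maximized over $\boldsymbol\beta\in\mathbb{R}^J$ subject to the normalization $\boldsymbol 1\,\boldsymbol p=1$ with $\boldsymbol p=\exp(\mathbf{A}'\boldsymbol\beta)$ taken entrywise. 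Differentiating the associated Lagrangian with respect to $\boldsymbol\beta$ yields the likelihood equations
\begin{equation*}
\mathbf{A}\boldsymbol y=\alpha\,\mathbf{A}\hat{\boldsymbol p},\qquad \boldsymbol 1\,\hat{\boldsymbol p}=1,
\end{equation*}
with multiplier $\alpha$, which is precisely (\ref{MultLkE}). Here $\alpha>0$ is well defined: since $\cup_{j}S_j=\mathcal{I}$, the matrix $\mathbf{A}$ has no zero column, so $\mathbf{A}\hat{\boldsymbol p}$ has strictly positive entries (using $\hat{\boldsymbol p}>0$), and the first equation then forces $\alpha$ to be the common ratio of the corresponding entries of $\mathbf{A}\boldsymbol y$ and $\mathbf{A}\hat{\boldsymbol p}$.

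Next, the MLE of the subset-sum vector $\boldsymbol T(\boldsymbol Y)=\mathbf{A}\boldsymbol Y$ under $RM_{\boldsymbol p}(\mathbf{S})$ is the subset-sum vector of the fitted cell frequencies $\hat{\boldsymbol y}=N\hat{\boldsymbol p}$, i.e. $\mathbf{A}\hat{\boldsymbol y}=N\mathbf{A}\hat{\boldsymbol p}$. Combining this with the likelihood equation gives
\begin{equation*}
\mathbf{A}\hat{\boldsymbol y}=N\mathbf{A}\hat{\boldsymbol p}=\frac{N}{\alpha}\,\mathbf{A}\boldsymbol y,
\end{equation*}
so the MLEs of the subset sums equal $N/\alpha$ times their observed values, which is the assertion. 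As a remark, one sees that $\alpha=N$ would force $\boldsymbol 1\in R(\mathbf{A})$: indeed, taking logarithms in $\hat{\boldsymbol y}=N\hat{\boldsymbol p}$ and arguing as in the ``(\ref{A})$\Longrightarrow$(\ref{B})'' step of Theorem~\ref{ThPoissonMequiv}, or invoking Corollary~\ref{CorEq} directly, gives $\boldsymbol 1\in R(\mathbf{A})$. Hence under the present hypothesis the proportionality constant is in general different from $1$, which is the contrast with Corollary~\ref{CorEq}.

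There is essentially no hard step here; the argument is a two-line consequence of the likelihood equations already recorded in the proof of Theorem~\ref{ThPoissonMequiv}. The only point requiring a little care is the existence and positivity of $\alpha$, and this is handled by the ``no zero column'' property of $\mathbf{A}$ together with the assumed existence — hence strict positivity — of $\hat{\boldsymbol p}$.
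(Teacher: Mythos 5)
Your argument is correct and is essentially the paper's own proof: the paper likewise reads the proportionality directly off the likelihood equations (\ref{MultLkE}), noting only that without $\boldsymbol 1\in R(\mathbf{A})$ the multiplier $\alpha$ cannot be pinned down to $N$, so that $\mathbf{A}\boldsymbol y=\frac{\alpha}{N}\mathbf{A}\hat{\boldsymbol y}$. Your additional remarks on the well-definedness and positivity of $\alpha$ and on the contrast with Corollary~\ref{CorEq} are correct elaborations of details the paper leaves implicit.
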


\begin{proof}
In this case the value of $\alpha$ cannot be found from (\ref{MultLkE}) and one can only assert that
$$\mathbf{A}\boldsymbol y = \frac{\alpha}{N} \mathbf{A}\hat{\boldsymbol y}.$$
\end{proof}

Example \ref{CrabEx} illustrates a situation when a relational model for intensities is not scale invariant. This model is a curved exponential family. The existence and uniqueness of the maximum likelihood estimates in such relational models is proved next.

\begin{theorem}
Let  $\boldsymbol Y \sim \mbox{M }(N, \boldsymbol p)$, $\boldsymbol y $ be a realization of $\boldsymbol Y$, and $RM_{\boldsymbol p}(\mathbf{S})$ be a relational model, given $\boldsymbol 1 \notin R(\mathbf{A})$. The maximum likelihood estimate for $\boldsymbol p$, under the model $RM_{\boldsymbol p}(\mathbf{S})$, exists and unique if and only if $\boldsymbol T(\boldsymbol y) >0$.
\end{theorem}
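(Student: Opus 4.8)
The plan is to recast the maximization as that of a linear functional over the boundary of a convex body and to read off both claims from the geometry of that body. Since $\mathbf{A}$ has full row rank $J$, the map $\boldsymbol{\beta}\mapsto\boldsymbol{p}=\exp(\mathbf{A}'\boldsymbol{\beta})$ is a bijection from the surface $M'=\{\boldsymbol{\beta}\in\mathbb{R}^J:g(\boldsymbol{\beta})=1\}$ onto $RM_{\boldsymbol p}(\mathbf{S})$, where $g(\boldsymbol{\beta})=\boldsymbol{1}\exp(\mathbf{A}'\boldsymbol{\beta})=\sum_{i\in\mathcal{I}}\exp\big((\mathbf{A}'\boldsymbol{\beta})_i\big)$, and by (\ref{efMultinomial}) the log-likelihood equals, up to an additive constant, the linear function $\ell(\boldsymbol{\beta})=\boldsymbol{\beta}'\mathbf{A}\boldsymbol{y}=\boldsymbol{\beta}'\boldsymbol{T}(\boldsymbol{y})$. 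Full row rank of $\mathbf{A}$ makes $g$ strictly convex with nowhere-vanishing gradient $\nabla g(\boldsymbol{\beta})=\mathbf{A}\exp(\mathbf{A}'\boldsymbol{\beta})$, so $C=\{\boldsymbol{\beta}:g(\boldsymbol{\beta})\le 1\}$ is a closed, strictly convex body with boundary $M'$, recession cone $\{\boldsymbol{u}:\mathbf{A}'\boldsymbol{u}\le\boldsymbol{0}\}$, and trivial lineality space (as $\mathbf{A}'$ is injective), i.e. $C$ is pointed. Since $N\ge 1$ and $\cup_jS_j=\mathcal{I}$, $\boldsymbol{T}(\boldsymbol{y})\ne\boldsymbol{0}$, and from any $\boldsymbol{\beta}$ with $g(\boldsymbol{\beta})<1$ one can strictly increase $\ell$ while staying in $C$ (move along $\boldsymbol{T}(\boldsymbol{y})$); hence maximizing $\ell$ over $M'$ and over $C$ are equivalent problems.

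Uniqueness is then immediate, since a linear functional attains its maximum over a strictly convex set in at most one point. For the existence statement I would invoke the standard fact that, for a pointed closed convex set $C$, a linear functional bounded above on $C$ attains its maximum there if and only if $\ell(\boldsymbol{u})<0$ for every nonzero $\boldsymbol{u}$ in the recession cone of $C$ (the superlevel sets then being compact). Here $\ell(\boldsymbol{u})=(\mathbf{A}'\boldsymbol{u})'\boldsymbol{y}\le 0$ automatically whenever $\mathbf{A}'\boldsymbol{u}\le\boldsymbol{0}$, because $\boldsymbol{y}\ge\boldsymbol{0}$; so the criterion reads: the MLE exists if and only if there is no $\boldsymbol{u}\ne\boldsymbol{0}$ with $\mathbf{A}'\boldsymbol{u}\le\boldsymbol{0}$ and $(\mathbf{A}'\boldsymbol{u})'\boldsymbol{y}=0$. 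Necessity of $\boldsymbol{T}(\boldsymbol{y})>\boldsymbol{0}$ now drops out: if $\mathcal{T}_{j_0}(\boldsymbol{y})=0$, so that $S_{j_0}$ is disjoint from $\{i:y(i)>0\}$, then $\boldsymbol{u}=-\boldsymbol{e}_{j_0}$ has $\mathbf{A}'\boldsymbol{u}=-\mathbf{I}_{S_{j_0}}\le\boldsymbol{0}$ and $\ell(\boldsymbol{u})=-\mathcal{T}_{j_0}(\boldsymbol{y})=0$, so no maximum is attained. (Equivalently, from the likelihood equations $\mathbf{A}\boldsymbol{y}=\alpha\mathbf{A}\hat{\boldsymbol{p}}$ in (\ref{MultLkE}): $\hat{\boldsymbol{p}}>\boldsymbol{0}$ forces $\mathbf{A}\hat{\boldsymbol{p}}>\boldsymbol{0}$, and $\boldsymbol{T}(\boldsymbol{y})\ne\boldsymbol{0}$ then forces $\alpha>0$, whence $\boldsymbol{T}(\boldsymbol{y})=\alpha\mathbf{A}\hat{\boldsymbol{p}}>\boldsymbol{0}$.)

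It remains to prove sufficiency, and this is where I expect the one genuinely delicate step to lie. Assuming $\boldsymbol{T}(\boldsymbol{y})>\boldsymbol{0}$, one must exclude a nonzero $\boldsymbol{u}$ with $\mathbf{A}'\boldsymbol{u}\le\boldsymbol{0}$ and $(\mathbf{A}'\boldsymbol{u})'\boldsymbol{y}=0$; such a $\boldsymbol{u}$ would force $(\mathbf{A}'\boldsymbol{u})_i=0$ for every $i$ with $y(i)>0$, so that $\boldsymbol{v}:=-\mathbf{A}'\boldsymbol{u}$ is a nonzero (by full rank) nonnegative vector lying in the row space $R(\mathbf{A})$ and vanishing on $\{i:y(i)>0\}$. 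The crux is the combinatorial lemma that, because $\mathbf{A}$ is the $0/1$ incidence matrix of $\mathbf{S}$ with $\cup_jS_j=\mathcal{I}$ and because $\boldsymbol{1}\notin R(\mathbf{A})$, the mere existence of such a $\boldsymbol{v}$ already forces some individual row $\mathbf{I}_{S_{j_0}}$ of $\mathbf{A}$ to vanish on $\{i:y(i)>0\}$, i.e. $\mathcal{T}_{j_0}(\boldsymbol{y})=0$, contradicting $\boldsymbol{T}(\boldsymbol{y})>\boldsymbol{0}$. The hypothesis $\boldsymbol{1}\notin R(\mathbf{A})$ is essential: without it the model is a regular family and the correct existence condition becomes that $\boldsymbol{T}(\boldsymbol{y})$ lie in the relative interior of the cone spanned by the columns of $\mathbf{A}$, which in general is a proper subcone of the positive orthant. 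To establish the lemma I would take $\boldsymbol{v}=\sum_jc_j\mathbf{I}_{S_j}$ of inclusion-minimal support among nonzero nonnegative elements of $R(\mathbf{A})$ vanishing on $\{i:y(i)>0\}$ and analyze how the sets with $c_j>0$ interlock with those with $c_j<0$ on $\{i:y(i)>0\}$, using $\boldsymbol{1}\notin R(\mathbf{A})$ to rule out the balanced cancellation configurations; an induction on $|\mathcal{I}|$ that deletes a cell, or a reformulation via Stiemke's theorem of the alternative compared with the existence criterion for the associated Poisson model $RM_{\boldsymbol\lambda}(\mathbf{S})$ — which by the identical recession-cone computation admits an MLE under exactly the same condition on $(\mathbf{A},\boldsymbol{y})$ — are alternative routes. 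All the remaining ingredients are routine convex analysis.
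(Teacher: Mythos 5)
Your reduction of the problem to maximizing the linear functional $\ell(\boldsymbol\beta)=\boldsymbol\beta'\boldsymbol T(\boldsymbol y)$ over the convex body $C=\{\boldsymbol\beta: g(\boldsymbol\beta)\le 1\}$, the identification of the recession cone of $C$ as $\{\boldsymbol u:\mathbf{A}'\boldsymbol u\le\boldsymbol 0\}$, the uniqueness argument via strict convexity, and the necessity direction are all sound. The genuine gap is exactly where you flag it: sufficiency rests on the ``combinatorial lemma'' that a nonzero nonnegative $\boldsymbol v\in R(\mathbf{A})$ vanishing on $\mathrm{supp}(\boldsymbol y)$ forces some single row of $\mathbf{A}$ to vanish there. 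You do not prove this lemma, and it is in fact false. Take $\mathcal{I}=\{1,2,3,4\}$, $S_1=\{1,3\}$, $S_2=\{2,3,4\}$, $S_3=\{1,2,3\}$ (so $\mathbf{A}$ has full row rank, $\cup_jS_j=\mathcal{I}$, and $\boldsymbol 1\notin R(\mathbf{A})$), with $\boldsymbol y=(1,1,0,0)$. Then $\boldsymbol T(\boldsymbol y)=(1,1,2)'>0$, yet $\boldsymbol v=\mathbf{I}_{S_1}+\mathbf{I}_{S_2}-\mathbf{I}_{S_3}=(0,0,1,1)$ is a nonzero nonnegative element of $R(\mathbf{A})$ supported off $\mathrm{supp}(\boldsymbol y)$, and no individual row of $\mathbf{A}$ vanishes on $\{1,2\}$. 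Your recession-cone criterion is giving the right answer in this example: the model is $p_1p_4=p_3$, the likelihood kernel is $p_1p_2$, and $\sup p_1p_2=1/4$ is approached only as $p_4\to 0$, so no maximizer exists in the strictly positive model even though every subset sum is positive. Thus the lemma you would need is unavailable; the correct existence criterion your computation produces --- that $\boldsymbol T(\boldsymbol y)$ not be orthogonal to any nonzero direction of the recession cone, equivalently that $\mathbf{A}\boldsymbol y$ lie in the relative interior of the cone generated by the columns of $\mathbf{A}$ --- is strictly stronger than $\boldsymbol T(\boldsymbol y)>0$.

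For comparison, the paper's proof takes the same convex-analytic route but constrains the canonical parameter to the negative orthant, defining $\mathcal{D}\subset\mathbb{R}^J_{-}$, which makes the recession cone of the sublevel set equal to $\mathbb{R}^J_{-}$ and lets the condition $\boldsymbol T(\boldsymbol y)>0$ drop out immediately. That restriction is not justified in general: the normalization $\sum_{i}\exp\{(\mathbf{A}'\boldsymbol\beta)_i\}=1$ does not force every $\beta_j\le 0$ when the subsets overlap, and in the example above the paper's argument breaks at the same place yours does. In short, your proposal is the honest version of the paper's argument, but the sufficiency half cannot be completed as stated; it requires either the stronger hypothesis on $\mathbf{A}\boldsymbol y$ or additional structural assumptions on $\mathbf{S}$.
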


\begin{proof}

A point in the canonical parameter space of the model $RM_{\boldsymbol p}(\mathbf{S})$ that maximizes the log-likelihood subject to the normalization constraint is a solution to the optimization problem:
$$ \underset{\mbox{s.t. } \boldsymbol \beta \in \mathcal{D}}{\mbox{max } l(\boldsymbol \beta; \boldsymbol y),}$$
where $$l(\boldsymbol \beta; \boldsymbol y)= \mathcal{T}_1(\boldsymbol y)\beta_1 + \dots + \mathcal{T}_J(\boldsymbol y)\beta_J$$ and 
$$\mathcal{D}=\{\boldsymbol \beta \in \mathbb{R}^J_{-}:\,\, \sum_{i \in \mathcal{I}} \mbox{exp}\{\sum_{j=1}^J \mathbf{I}_j(i)\beta_j\}-1=0\}.$$
The set $\mathcal{D}$ is non-empty and is a level set of a convex function. The level sets of convex functions are not convex in general. However the sub-level sets of convex functions and hence the set 
$$\mathcal{D}_{\leq}=\{\boldsymbol \beta \in \mathbb{R}^J_{-}:\,\, \sum_{i \in \mathcal{I}} \mbox{exp}\{\sum_{j=1}^J \mathbf{I}_j(i)\beta_j\}-1\leq 0\}$$
are convex.

The set of maxima of $l(\boldsymbol \beta; \boldsymbol y)$ over the set $\mathcal{D}_{\leq}$ is nonempty and consists of a single point if and only if \citep[][Section~3]{Bertsekas2009}
$$R_{\mathcal{D}_{\leq}} \cap R_{-l} = L_{\mathcal{D}_{\leq}}\cap L_{-l}.$$
Here $R_{\mathcal{D}_{\leq}}$ is the recession cone of the set $\mathcal{D}_{\leq}$, $R_{-l}$ is the recession cone of the function  $-l$, $L_{\mathcal{D}_{\leq}}$ is the lineality space of $\mathcal{D}_{\leq}$, and $L_{-l}$ is the lineality space of $-l$. 

The recession cone of $\mathcal{D}_{\leq}$ is the orthant $\mathbb{R}^J_{-}$, including the origin; the lineality space is  $L_{\mathcal{D}_{\leq}}=\{0\}$. The lineality space of the function $-l$ is the plane passing through the origin, with the normal $\mathbf{T}(\boldsymbol y)$; the recession cone of $-l$ is the half-space above this plane. The condition $R_{\mathcal{D}_{\leq}} \cap R_{-l} = L_{\mathcal{D}_{\leq}}\cap L_{-l} = \{0\}$ holds if and only if all components of $\mathbf{T}(\boldsymbol y)=(\mathcal{T}_1(\boldsymbol y), \dots,  \mathcal{T}_J(\boldsymbol y))'$ are positive.

The function $l(\boldsymbol \beta; \boldsymbol y)$ is linear; its maximum is achieved on $\mathcal{D}$. Therefore there exists one and only one $\boldsymbol \beta$ which maximizes the likelihood over the canonical parameter space and the maximum likelihood estimate for $\boldsymbol p$, under the model $RM_{\boldsymbol p}(\mathbf{S})$, exists and unique. 
\end{proof}

\begin{table}[h]
\begin{minipage}[b]{0.5\linewidth}
\centering
\caption{The MLEs for the Number of trapped \textit{Charybdis japonica} by bait type}
\label{Fish2MLE}
\vspace{5mm}
\begin{tabular}{|c|cc|}
\hline
 &
\multicolumn{2}{|c|}{ \mbox{ Fish}}\\
\cline{2-3}
\mbox{ Sugarcane }& \mbox{Yes}& \mbox{No }\\
\hline
\mbox{ Yes }	& 35.06 	& 2.94 \\
\mbox{ No } 	& 11.94  &   -  \\ 
\hline
\end{tabular}
\end{minipage}
\hspace{0.5cm}
\begin{minipage}[b]{0.5\linewidth}
\centering
\caption{The MLEs for the Number of trapped \textit{Portunuspelagicus} by bait type.}
\label{Fish1MLE}
\vspace{5mm}
\begin{tabular}{|c|cc|}
\hline
 &
\multicolumn{2}{|c|}{ \mbox{ Fish}}\\
\cline{2-3}
\mbox{ Sugarcane }& \mbox{Yes}& \mbox{No }\\
\hline
\mbox{ Yes }	& 72.31 	& 1.69 \\
\mbox{ No } 	& 42.69  &   -  \\ 
\hline
\end{tabular}
\end{minipage}
\end{table}

\vspace{7 mm}
\textbf{ Example  \ref{CrabEx}  (Revisited)}
{
In this example, the relational model for intensities is not scale invariant. The maximum likelihood estimates for the cell frequencies exist and are shown in Tables \ref{Fish2MLE} and \ref{Fish1MLE}. The observed Pearson's statistics are $X^2 = 0.40$ and $X^2 = 1.07$ respectively, on one degree of freedom. \qed 
}
\vspace{5 mm}

The relational models framework deals with models generated by subsets of cells, and the model matrix for a relational model is an indicator matrix that has only 0-1 entries. Theorems \ref{mixedTh}, \ref{ThPoissonMequiv} hold if the model matrix has non-negative integer entries.  The next example illustrates how the techniques and theorems apply to some discrete exponential models.  

\begin{example} \label{ExCalves}
This example, given in \citep{Agresti2002}, describes the study carried out to determine if a pneumonia infection has an immunizing effect on dairy calves. Within 60 days after birth, the calves were exposed to a pneumonia infection. The calves that got the infection were then classified according to whether or not they got the secondary infection within two weeks after the first infection cleared up. The number of the infected calves is thus a random variable with the multinomial distribution $M(N,(p_{11},p_{12},p_{22})')$, where $N$ denotes the total number of calves in the sample. Suppose further that $p_{11}$ is the probability to get both the primary and the secondary infection, $p_{12}$ is the probability to get only the primary infection and not the secondary one, and $p_{22}$ is the probability not to catch either the primary or the secondary infection.
Let $0 < \pi < 1$ denote the probability to get the primary infection. The hypothesis of no immunizing effect of the primary infection is expressed as \citep[cf.][]{Agresti2002}
\begin{equation} \label{calP}
p_{11}=\pi^2, \,\,p_{12}=\pi(1-\pi), \,\,p_{22}=1 - \pi.
\end{equation} 
Since the model (\ref{calP}) is also expressed in terms of non-homogeneous odds ratios: $$\frac{p_{11}  p_{22}^2}{p_{12}^2} = 1,$$ 
then it is a relational model for probabilities, without the overall effect.

Write $N_{11}, \, N_{12},\, N_{22}$ for the number of calves in each category and $n_{11}, \, n_{12}, \, n_{22}$ for their realizations. The log-likelihood is proportional to
$$(2n_{11}+n_{12})\mbox{log } \pi +  (n_{12}+n_{22}) \mbox{log } (1 - \pi).$$
The canonical statistic $\boldsymbol T = (\mathcal{T}_1,\mathcal{T}_2) = (2N_{11}+N_{12}, N_{12}+N_{22})$ is two-dimensional; the canonical parameter space $\{(\mbox{log } \pi, \mbox{log }(1- \pi)): \,\, \pi \in (0,1) \}$ is the curve in $\mathbb{R}^2$  shown on Figure \ref{FigureCalves}. The model (\ref{calP}) is thus a curved exponential family of order 1. 

The likelihood is maximized by   
$$\hat{\pi} = \frac{2n_{11}+n_{12}}{2n_{11}+2n_{12}+n_{22}}= \frac{{T}_1}{{T}_1+{T}_2},$$
where $T_1 = 2n_{11}+n_{12}$ and $T_2 = n_{12}+n_{22}$ are observed components of the canonical statistic, or subset sums. The MLEs of the subset sums can be expressed in terms of their observed values as 
\begin{eqnarray*}
\hat{\mathcal{T}_1} &=& N(2 \hat{\pi}^2 + \hat{\pi}(1-\hat{\pi})) = N(\frac{2{T}_1^2}{({T}_1+{T}_2)^2} + \frac{{T}_1{T}_2}{({T}_1+{T}_2)^2}) = {T}_1\frac{N(2{T}_1+{T}_2)}{({T}_1+{T}_2)^2},\\
\hat{\mathcal{T}_2} &=& N(\hat{\pi}(1-\hat{\pi})+(1-\hat{\pi})) = N(\frac{{T}_1{T}_2}{({T}_1+{T}_2)^2} + \frac{{T}_2}{{T}_1+{T}_2}) = {T}_2\frac{N(2{T}_1+{T}_2)}{({T}_1+{T}_2)^2}.\\
\end{eqnarray*}
Thus, under the model (\ref{calP}), the MLEs of the subset sums differ from their observed values by the factor $\frac{N(2\mathcal{T}_1+\mathcal{T}_2)}{(\mathcal{T}_1+\mathcal{T}_2)^2}$. For the data and the MLEs in Table \ref{CalvesData}, this factor is approximately $0.936$. \qed 
\end{example}

\begin{figure}
\centering
\includegraphics[scale=0.55]{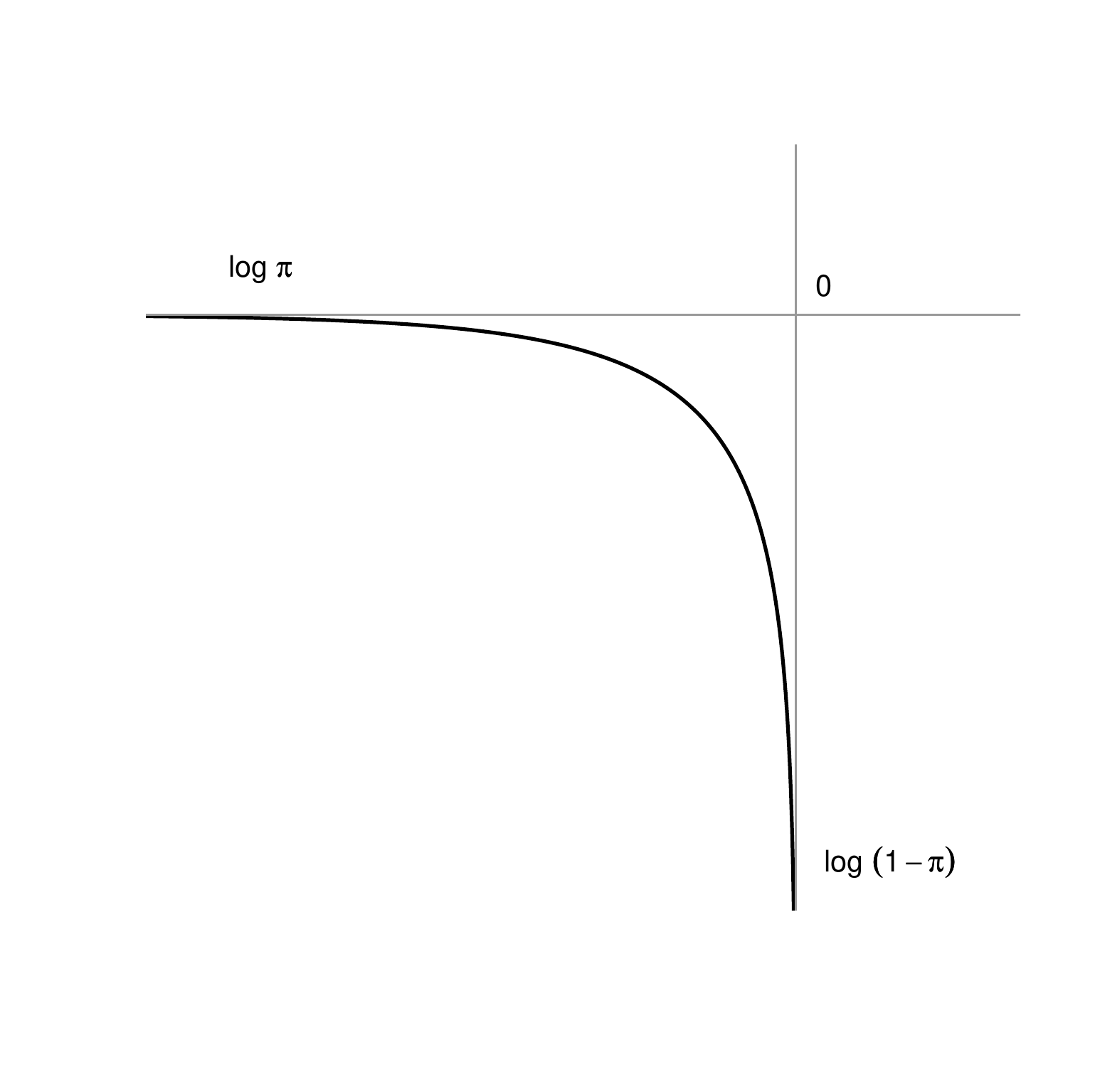}
\caption{The canonical parameter space in Example \ref{ExCalves}.}
\label{FigureCalves}
\end{figure}

\begin{table}[h]
\centering
\caption{Observed Counts for Primary and Secondary Pneumonia Infection of Calves. 
The MLEs are shown in parentheses \citep{Agresti2002}. }
\label{CalvesData}
\vspace{5mm}
\begin{tabular}{|c|cc|}
\hline
 &
\multicolumn{2}{|c|}{ \mbox{ Secondary Infection}}\\
\cline{2-3}
\mbox{ Primary Infection }& \mbox{Yes}& \mbox{No }\\
\hline
\mbox{ Yes }	& 30 (38.1)	& 63 (39.0)\\
\mbox{ No } 	& -  &  63 (78.9 )  \\ 
\hline
\end{tabular}
\end{table}

\section{Mixed Parameterization of Exponential Families} \label{sec3}

Let $\mathcal{P}_{\boldsymbol \delta}$ be an exponential family formed by all strictly positive distributions on $\mathcal{I}$ and $\mbox{log }\boldsymbol \delta$ be the canonical parameters of this family. Denote by $\mathcal{P}_{\boldsymbol \gamma}$  the reparameterization of $\mathcal{P}_{\boldsymbol \delta}$ defined by the following one-to one mapping:
\begin{equation}\label{param}  
\mbox{log }\boldsymbol \delta = {\mathbf{M}'}\boldsymbol\gamma,
\end{equation}
where $\mathbf{M}$ is a full rank, $|\mathcal{I}| \times |\mathcal{I}|$, integer matrix, and $\boldsymbol\gamma \in \mathbb{R}^{|\mathcal{I}|}$. It was shown by \cite{BrownBook} that $\mathcal{P}_{\boldsymbol \gamma}$ is an exponential family with the canonical parameters ${\boldsymbol \gamma}$.

\begin{theorem}\label{ThCanORNEW}
The canonical parameters of $\mathcal{P}_{\boldsymbol \gamma}$ are the generalized log odds ratios in terms of ${\boldsymbol \delta}$.
\end{theorem}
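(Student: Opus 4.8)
The plan is to exploit the linear change of variables (\ref{param}) directly. Since $\mathbf{M}$ is a full rank $|\mathcal{I}|\times|\mathcal{I}|$ matrix, (\ref{param}) inverts to $\boldsymbol\gamma = (\mathbf{M}')^{-1}\log\boldsymbol\delta = (\mathbf{M}^{-1})'\log\boldsymbol\delta$, so each canonical parameter $\gamma_k$ is a fixed linear combination of the $\log\delta(i)$, $i\in\mathcal{I}$, and these $|\mathcal{I}|$ combinations are linearly independent. It then remains to recognize such combinations as logarithms of generalized odds ratios and to note that they really are canonical parameters of $\mathcal{P}_{\boldsymbol\gamma}$ --- the latter being exactly the fact of \cite{BrownBook} quoted just before the theorem.

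First I would make the coefficients explicit via Cramer's rule: $\mathbf{M}^{-1} = (\det\mathbf{M})^{-1}\mathrm{adj}(\mathbf{M})$, where $\mathrm{adj}(\mathbf{M})$ is an integer matrix because $\mathbf{M}$ is integer. Writing $\tilde m_{ik}\in\mathbb{Z}$ for the $(i,k)$ entry of $\mathrm{adj}(\mathbf{M})$, the $k$-th component of $\boldsymbol\gamma$ is
\[
\gamma_k \;=\; \frac{1}{\det\mathbf{M}}\sum_{i\in\mathcal{I}}\tilde m_{ik}\,\log\delta(i).
\]
Next I would split the integer vector $(\tilde m_{ik})_{i\in\mathcal{I}}$ into its positive and negative parts, exactly as is done for the rows of a kernel basis matrix before (\ref{ORparamz}): set $u_k(i)=\max\{\tilde m_{ik},0\}$ and $v_k(i)=\max\{-\tilde m_{ik},0\}$, so that $u_k(i),v_k(i)\in\mathbb{Z}_{\geq 0}$ and $\tilde m_{ik}=u_k(i)-v_k(i)$. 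Then $\sum_{i}\tilde m_{ik}\log\delta(i) = \log\big(\boldsymbol\delta^{\boldsymbol u_k}/\boldsymbol\delta^{\boldsymbol v_k}\big)$, which is the logarithm of a generalized odds ratio in the sense of (\ref{defOR}). Hence $(\det\mathbf{M})\,\gamma_k$ is a generalized log odds ratio in $\boldsymbol\delta$ for every $k$; conversely, any prescribed generalized log odds ratio $\log(\boldsymbol\delta^{\boldsymbol u}/\boldsymbol\delta^{\boldsymbol v})$ arises this way, since the integer row $\boldsymbol u-\boldsymbol v$ may be taken as a row of a suitable $\mathbf{M}$, so the family of such ratios is not artificially restricted.

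The one point I would be careful to state precisely --- and the only genuine obstacle --- is integrality: $(\mathbf{M}')^{-1}$ has rational, not necessarily integer, entries, so the identification of $\boldsymbol\gamma$ with generalized log odds ratios is exact only after clearing the nonzero integer denominator $\det\mathbf{M}$ (equivalently, it holds verbatim when $\mathbf{M}$ is unimodular, which is the situation in the mixed parameterizations used in the sequel, and otherwise up to this fixed scalar). Everything else is routine linear algebra combined with the already-cited fact that a linear reparameterization of the canonical parameter of a full exponential family again yields a canonical parameter.
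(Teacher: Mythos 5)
Your proposal is correct and follows essentially the same route as the paper: invert the full-rank linear map (\ref{param}), express $(\mathbf{M}')^{-1}$ via the integer adjugate divided by $\det\mathbf{M}$, read off each $\gamma_k$ as $\tfrac{1}{\det\mathbf{M}}$ times an integer combination of the $\log\delta(i)$ (hence a generalized log odds ratio after splitting into positive and negative parts), and dispose of the scalar $\det\mathbf{M}$ by absorbing it into the canonical statistics. Your explicit positive/negative-part decomposition and the remark about unimodularity only make precise what the paper states more tersely.
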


\begin{proof}
Since the matrix $\mathbf{M}$ is full rank, then 
\begin{equation}\label{mixedPORnew}
\boldsymbol \gamma = (\mathbf{M}')^{-1}\mbox{log } \boldsymbol \delta.
\end{equation}
Let $\mathbf{B}$ denote the adjoint matrix to $\mathbf{M}'$ and write $\boldsymbol{b}_1, \dots, \boldsymbol{b}_{|\mathcal{I}|}$ for the rows of $\mathbf{B}$. The components of $\boldsymbol \gamma$ can be expressed as:
\begin{equation}\label{orbetas}
\gamma_i = \frac{1}{\mbox{det}(\mathbf{M})}\mbox{log }\boldsymbol \delta^{\boldsymbol{b}_i}, \,\, \mbox{for } i = 1, \dots, |\mathcal{I}|.
\end{equation}
All rows of $\mathbf{B}$ are integer vectors and thus the components of $\boldsymbol \gamma$ are multiples of the generalized log odds ratios. The common factor ${1}/{\mbox{det}(\mathbf{M})}\ne 0$ can be included in the canonical statistics, and the canonical parameters become equal to the generalized log odds ratios.
\end{proof}

Let $\mathbf{A}$ be a full row rank $J \times |\mathcal{I}|$ matrix with non-negative integer entries, and $\mathbf{D}$ denote a kernel basis matrix of $\mathbf{A}$. Set 
\begin{equation}\label{MatrixM}
\mathbf{M}=\left[\begin{array}{c} \mathbf{A}\\
\mathbf{D} \end{array}\right],
\end{equation}
find the inverse of $\mathbf{M}$ and partition it as 
$$\mathbf{M}^{-1}=\left[\mathbf{A}^-,\mathbf{D}^-\right].$$
Since $\mathbf{D}\mathbf{A}'=\boldsymbol 0$, then $(\mathbf{D}^-)'\mathbf{A}^- = \boldsymbol 0$. This matrix $\mathbf{M}$ can be used to derive a mixed parameterization of $\mathcal{P}$ with variation independent parameters \citep[cf.][]{BrownBook, HJ2}. Under this parameterization,
\begin{equation}\label{expandMix}
\boldsymbol \delta \longmapsto \left(\begin{array}{c} \boldsymbol \zeta_1  \\ \boldsymbol \zeta_2 \end{array}\right),
\end{equation}
where $\boldsymbol \zeta_1 = \mathbf{A}\boldsymbol \delta$ (mean-value parameters) and $\boldsymbol \zeta_2 = \mathbf{D}^-\mbox{log }\boldsymbol \delta$ (canonical parameters), and the range of the vector $(\boldsymbol \zeta_1, \boldsymbol \zeta_2)'$ is the Cartesian product of the separate ranges of $\boldsymbol \zeta_1$ and $\boldsymbol \zeta_2$.

Another mixed parameterization, which does not require calculating the inverse of $\mathbf{M}$, may be obtained as follows. 
Notice first that for any $\boldsymbol\delta \in \mathbb{R}^{|\mathcal{I}|}_+$ there exist unique vectors $\boldsymbol\beta \in \mathbb{R}^J$ and $\boldsymbol\theta \in \mathbb{R}^{|\mathcal{I}|-J}$ such that 
\begin{equation}\label{mixedPOR}
\mbox{log }\boldsymbol \delta = {\mathbf{A}'}\boldsymbol\beta + \mathbf{D}'\boldsymbol \theta.
\end{equation}
By orthogonality,
\begin{eqnarray}\label{thetaOR}
\mathbf{D}\mbox{log }\boldsymbol \delta &=& \boldsymbol 0 + \mathbf{D}\mathbf{D}'\boldsymbol \theta   \nonumber \\
\boldsymbol \theta &=& (\mathbf{D}\mathbf{D}')^{-1} \mathbf{D}\mbox{log }\boldsymbol \delta
\end{eqnarray}
Because of the uniqueness, $\mathbf{D}^- = (\mathbf{D}\mathbf{D}')^{-1} \mathbf{D}$. Moreover, since there is one-to-one correspondence between ${\boldsymbol \zeta}_2$ and $\tilde{\boldsymbol \zeta}_2 = \mathbf{D}\mbox{log }\boldsymbol \delta$, then, in the mixed parameterization, the parameter ${\boldsymbol \zeta}_2$ can be replaced with $\tilde{\boldsymbol \zeta}_2$. 
The components of $\tilde{\boldsymbol \zeta}_2 = \mathbf{D}\mbox{log }\boldsymbol \delta$ are some generalized log odds ratios  as well. 

A relational model is clearly defined and parameterized in the mixed parameterization derived from the model matrix of this model. In this parameterization the model requires logs of the generalized odds ratios to be zero and distributions in this model are parameterized by the remaining mean-value parameters. 

The following two examples illustrate the proposed mixed parameterization.

\textbf{ Example  \ref{CIex} (Revisited)}
{Consider a $2 \times 2\times 2$ contingency table and matrices $\mathbf{A}$ and $\mathbf{D}$ as in Example \ref{CIex}.
From (\ref{mixedPOR}): \begin{eqnarray} \label{mixed22}
\mbox{log }\boldsymbol p &=& {\mathbf{A}'}\boldsymbol\beta + \theta_1\cdot (1,0 ,-1,0, -1,0, 1 , 0)' + \theta_2\cdot (0,1,0, -1,0, -1,0, 1)',  
\end{eqnarray}
for some $\,\boldsymbol\beta \in \mathbb{R}^{6}$ and $\theta_1, \theta_2 \in \mathbb{R}$. 

Since the rows of $\mathbf{D}$ are mutually orthogonal, then   
\begin{eqnarray*}
(1,0 ,-1,0, -1,0, 1 , 0) \mbox{log }\boldsymbol p &=&  4\theta_1, \\
 (0,1,0, -1,0, -1,0, 1)  \mbox{log }\boldsymbol p &=&  4\theta_2. 
\end{eqnarray*} 
Thus, $\theta_1 = \frac{1}{4}\mbox{log } (p_{111}p_{221})/(p_{121}p_{211})$ and $\theta_2 = \frac{1}{4}\mbox{log } (p_{112}p_{222})/(p_{122}p_{212})$, as it is well known \citep[see e.g.][]{BFH}.

The parameters $\boldsymbol \beta$ can be expressed as generalized log odds ratios by applying (\ref{orbetas}): 
\begin{eqnarray*}
\beta_1 &=& \mbox{log } \frac{p_{111}^3p_{121}p_{211}}{p_{221}}, \hspace{20mm} 
\beta_2 = \mbox{log }\frac{p_{211}^2p_{221}^2}{p_{111}^2p_{121}^2},\\
\beta_3 &=& \mbox{log }\frac{p_{121}^2p_{221}^2}{p_{111}^2p_{211}^2}, \hspace{26mm}
\beta_4 = \mbox{log }\frac{p_{112}^3p_{122}p_{212}p_{221}}{p_{111}^3p_{121}p_{211}p_{222}},\\
\beta_5 &=& \mbox{log }\frac{p_{111}^2p_{121}^2p_{212}^2p_{222}^2}{p_{112}^2p_{122}^2p_{211}^2p_{221}^2}, \hspace{12mm}
\beta_6 = \mbox{log }\frac{p_{111}^2p_{122}^2p_{211}^2p_{222}^2}{p_{112}^2p_{121}^2p_{212}^2p_{221}^2}. 
\end{eqnarray*}
The mean-value parameters for this family are $\boldsymbol \zeta_1 = N\mathbf{A}\boldsymbol p$ (the expected values of the subset sums). The mixed parameterization consists of the mean-value parameters and the canonical parameters ${\boldsymbol \zeta}_2 = (\theta_1, \theta_2)'$ or $\tilde{\boldsymbol \zeta}_2 = \mathbf{D}\mbox{log }\boldsymbol p$.
\qed
}

Some models, more general than relational models, can be specified by setting generalized odds ratios equal to positive constants. An example of such model is given next. 

\begin{example} \label{HardyWeinbergEx}

The Hardy-Weinberg distribution arising in genetics was discussed as an exponential family in \cite{Barndorff1978, BrownBook}, among others.  Assume that a parent population contains alleles $G$ and $g$ with probabilities $\pi$ and $1-\pi$ respectively. The number of genotypes $GG$, $Gg$, and $gg$, that appear in a generation of $N$ descendants, is a random variable with $M(N, \boldsymbol p)$ distribution. Under the model of random mating and no selection, the vector of probabilities $\boldsymbol p$ has components
\begin{equation} \label{HWprobs}
p_1 = \pi^2, \,\, p_2 = 2\pi(1-\pi), \,\, p_3 = (1-\pi)^2.
\end{equation}
The model (\ref{HWprobs}) is a one-parameter regular exponential family with the canonical parameter $\mbox{log } \frac{\pi}{1-\pi}$. This model is slightly more general than relational models, but the techniques used for relational models apply. The model representation in terms of homogeneous odds ratios is 
\begin{equation}\label{HardyWmodel}
\frac{p_2^2}{p_1p_3} = 4.
\end{equation}
If the kernel basis matrix is chosen as $\mathbf{D} = (-1, 2 , -1)$ and the model matrix is  
$$\mathbf{A} =\left(\begin{array}{ccc}2 & 1 & 0\\ 0& 1 & 2 \end{array} \right),$$
the model (\ref{HardyWmodel}) can be expressed as
$$\mathbf{D} \mbox{log } \boldsymbol p = 2\mbox{log }2.$$

There exists a mixed parameterization of the family of multinomial distributions of the form 
\begin{equation}
\mbox{log } \boldsymbol p = \mathbf{A}' \boldsymbol \beta + \mathbf{D}' \theta.
\end{equation}
Here $\boldsymbol \beta = (\beta_1, \beta_2)'$ and $\theta \in (-\infty, \infty)$.
From the equation (\ref{thetaOR}): $$\theta = \frac{1}{6}\mbox{log }\frac{p_2^2}{p_1p_3}.$$ 
The parameter $\theta$ may be interpreted as a measure of the strength of selection in favor of the heterozygote character $Gg$ \cite[cf.][]{BrownBook}.

The condition $\mathbf{D} \mbox{log } \boldsymbol p = \mbox{log }4$ is equivalent to setting the parameter $\theta$ equal to  
$\frac{1}{6}\mbox{log } \frac{1}{4}.$
\qed
\end{example}

It is well known for a multidimensional contingency table that marginal distributions are variation independent from conditional odds ratios. Properly selected conditional odds ratios and sets of marginal distributions determine the distribution of the table uniquely  \citep{Barndorff1976, RudasSAGE, BergsmaRudas2003}. A generalization of this fact to the set $\mathcal{I}$ is given in the following theorem.   

\begin{theorem} \label{VIth}
Let $\mathcal{P}$ be the set of all positive distributions on the table $\mathcal{I}$. Suppose $\mathbf{A}$ is a non-negative integer matrix of full row rank and $\mathbf{D}$ is a kernel basis matrix of $\mathbf{A}$. Then the following statements hold: 
{\begin{enumerate}[(i)] 
\item For any ${P}_{\boldsymbol{\delta}_1},\,{P}_{\boldsymbol{\delta}_2} \in \mathcal{P}$ there exists a distribution ${P}_{\boldsymbol \delta} \in \mathcal{P}$ and a scalar $\alpha$ such that
$$\mathbf{A} \boldsymbol{\delta}  = \alpha \mathbf{A} \boldsymbol{\delta}_1 \,\, \mbox{and } \,\, \mathbf{D} \mbox{log }\boldsymbol \delta = \mathbf{D} \mbox{log } \boldsymbol{\delta}_2.$$
\item The coefficient of proportionality $\alpha = 1$ if and only if $\boldsymbol 1 \in R(\mathbf{A})$. 
\end{enumerate}}
\end{theorem}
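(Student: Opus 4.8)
\emph{Proof proposal.} The plan is to prove (i) by simultaneously realizing the prescribed generalized odds ratios and the prescribed subset sums up to a scalar, using the variation independence of the mixed parameterization, and then fixing that scalar with an intermediate value argument so that the resulting vector is a probability distribution; part (ii) then follows from a short computation in one direction and a dimension count in the other. Concretely, fix $P_{\boldsymbol\delta_1},P_{\boldsymbol\delta_2}\in\mathcal P$ and consider the mixed parameterization built from the matrix $\mathbf M$ of (\ref{MatrixM}), in which the mean value coordinate is $\boldsymbol\zeta_1=\mathbf A\boldsymbol\delta$ and the canonical coordinate may be taken to be $\tilde{\boldsymbol\zeta}_2=\mathbf D\log\boldsymbol\delta$. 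For every $t>0$ the vector $t\boldsymbol\delta_1$ is strictly positive, so $\mathbf A(t\boldsymbol\delta_1)$ lies in the range of $\boldsymbol\zeta_1$, while $\mathbf D\log\boldsymbol\delta_2$ lies in the range of $\tilde{\boldsymbol\zeta}_2$; since these ranges multiply (the range of $(\boldsymbol\zeta_1,\tilde{\boldsymbol\zeta}_2)$ is their Cartesian product, see (\ref{expandMix})), there is for each $t>0$ a unique strictly positive vector $\boldsymbol\mu_t$ on $\mathcal I$, not a priori normalized, with $\mathbf A\boldsymbol\mu_t=t\,\mathbf A\boldsymbol\delta_1$ and $\mathbf D\log\boldsymbol\mu_t=\mathbf D\log\boldsymbol\delta_2$. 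As the mixed parameterization is a smooth reparameterization, $\boldsymbol\mu_t$ depends continuously on $t$.

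Next I would study $g(t)=\boldsymbol 1\boldsymbol\mu_t=\sum_{i\in\mathcal I}\mu_t(i)$. For each $j$, $\sum_{i\in S_j}\mu_t(i)=(\mathbf A\boldsymbol\mu_t)_j=t\sum_{i\in S_j}\delta_1(i)$, a strictly positive multiple of $t$; as $t\to0^{+}$ this forces $\mu_t(i)\to0$ for every $i$ lying in some $S_j$, hence for every $i\in\mathcal I$ because $\bigcup_jS_j=\mathcal I$ (the standing no zero column assumption for probabilities), so $g(t)\to0$, while as $t\to\infty$ we get $\max_{i\in S_j}\mu_t(i)\to\infty$ and hence $g(t)\to\infty$. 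By continuity and the intermediate value theorem there is $\alpha>0$ with $g(\alpha)=1$; then $\boldsymbol\delta:=\boldsymbol\mu_\alpha$ defines $P_{\boldsymbol\delta}\in\mathcal P$ with $\mathbf A\boldsymbol\delta=\alpha\,\mathbf A\boldsymbol\delta_1$ and $\mathbf D\log\boldsymbol\delta=\mathbf D\log\boldsymbol\delta_2$, which proves (i). One can moreover check that $g$ is strictly increasing (its derivative along the curve is $t\,(\mathbf A\boldsymbol\delta_1)'\big(\mathbf A\,\mathrm{diag}(\boldsymbol\mu_t)\,\mathbf A'\big)^{-1}(\mathbf A\boldsymbol\delta_1)>0$), so $\alpha$ and $\boldsymbol\delta$ are in fact unique.

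For (ii): if $\boldsymbol 1\in R(\mathbf A)$, pick $\boldsymbol k$ with $\boldsymbol k'\mathbf A=\boldsymbol 1$; then $1=\boldsymbol 1\boldsymbol\delta=\boldsymbol k'\mathbf A\boldsymbol\delta=\alpha\,\boldsymbol k'\mathbf A\boldsymbol\delta_1=\alpha\,\boldsymbol 1\boldsymbol\delta_1=\alpha$, so $\alpha=1$. Conversely, if $\boldsymbol 1\notin R(\mathbf A)$ then $\mathbb R\boldsymbol 1\cap R(\mathbf A)=\{\boldsymbol 0\}$, so $\mathbf A$ carries the $(|\mathcal I|-1)$-dimensional interior of the probability simplex onto a full $J$-dimensional open subset of $\mathbb R^{J}$, whereas for a fixed $\boldsymbol\delta_2$ the set $\{P_{\boldsymbol\delta}\in\mathcal P:\mathbf D\log\boldsymbol\delta=\mathbf D\log\boldsymbol\delta_2\}$ is a curved exponential family of order $J-1$ (Theorem \ref{MultAsEF} applied to the relational model with base measure $\boldsymbol\delta_2$), so its image under $\mathbf A$ has dimension at most $J-1$. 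Hence some $P_{\boldsymbol\delta_1}\in\mathcal P$ has a subset sum vector $\mathbf A\boldsymbol\delta_1$ not realized by any $P_{\boldsymbol\delta}\in\mathcal P$ with $\mathbf D\log\boldsymbol\delta=\mathbf D\log\boldsymbol\delta_2$, and for this pair the scalar of (i) cannot be $1$; this $\alpha$ plays the role of the proportionality factor between fitted and observed subset sums, as in Corollary \ref{CorProp}.

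The variation independence of the mixed parameterization is quoted input, and the limits of $g$ are elementary. The one genuinely delicate step is the boundary behaviour of $g$: one needs that its total mass sweeps all of $(0,\infty)$ as $t$ does, so that the normalization $g(\alpha)=1$ is attainable, and this is exactly where the intermediate value theorem and the hypothesis $\bigcup_jS_j=\mathcal I$ are used — the latter guaranteeing that all cell parameters vanish once all subset sums do.
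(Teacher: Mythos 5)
Your argument is correct, but it is not the route the paper intends: the paper omits the proof and points to Corollaries \ref{CorEq} and \ref{CorProp}, i.e.\ it realizes $P_{\boldsymbol\delta}$ as the maximum likelihood estimate, based on ``data'' proportional to $\boldsymbol\delta_1$, in the family $\{\boldsymbol\delta>0:\mathbf D\log\boldsymbol\delta=\mathbf D\log\boldsymbol\delta_2\}$, so that $\mathbf D\log\hat{\boldsymbol\delta}=\mathbf D\log\boldsymbol\delta_2$ holds by model membership and $\mathbf A\hat{\boldsymbol\delta}=\alpha\mathbf A\boldsymbol\delta_1$ (with $\alpha=1$ exactly when $\boldsymbol 1\in R(\mathbf A)$) comes from the likelihood equations. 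You instead work in the \emph{unnormalized} (intensity) family, where the mixed parameterization $(\mathbf A\boldsymbol\mu,\mathbf D\log\boldsymbol\mu)$ is genuinely variation independent and bijective, produce the curve $\boldsymbol\mu_t$, and recover the normalization by an intermediate value argument on $g(t)=\boldsymbol 1\boldsymbol\mu_t$. This is a sound and in some respects better route: it sidesteps the curved-exponential-family MLE existence theorem, it correctly handles the subtlety that the Cartesian-product claim in (\ref{expandMix}) cannot be applied verbatim to probability vectors when $\boldsymbol 1\notin R(\mathbf A)$ (otherwise $\alpha=1$ would always be attainable, contradicting (ii)), and your monotonicity computation for $g$ yields uniqueness of $(\alpha,\boldsymbol\delta)$, which the theorem does not even assert. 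Your converse in (ii) -- comparing the $J$-dimensional image of the open simplex under $\mathbf A$ with the at most $(J-1)$-dimensional image of the constrained family from Theorem \ref{MultAsEF} -- is likewise different from, and arguably tighter than, the orthogonality argument underlying Corollary \ref{CorEq}. The only points worth making explicit are the one-line justification that $\boldsymbol\mu\mapsto(\mathbf A\boldsymbol\mu,\mathbf D\log\boldsymbol\mu)$ is a bijection from $\mathbb R_{+}^{|\mathcal I|}$ onto the product of the two ranges (this is the regular Poisson case of the mixed parameterization, or equivalently strict convexity of $\boldsymbol\beta\mapsto\boldsymbol 1(\boldsymbol\delta_2 e^{\mathbf A'\boldsymbol\beta})-t\boldsymbol\beta'\mathbf A\boldsymbol\delta_1$), and the standing assumption that $\mathbf A$ has no zero columns, which you correctly identify as essential for $g(t)\to 0$.
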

The proof is straightforward, by Corollaries \ref{CorEq} and \ref{CorProp}, and is omitted here.

\section{Applications}\label{SecApplic}

The first example features relational models as a potential tool for modeling social mobility tables. A model of independence is considered on a space that is not the Cartesian product of the domains of the variables in the table. 

\begin{example}\label{DiagCellEx}
Social mobility tables often express a relation between statuses of two generations, for example, the relation between occupational statuses of respondents and their fathers, as in Table \ref{SocM} \citep{BlauDuncan}. To test the hypothesis of independence between respondent's mobility and father's status, consider the Respondent's mobility variable with three categories: Upward mobile (moving up compared to father's status), Immobile (staying at the same status), and Downward mobile (moving down compared to father's status). The initial table is thence transformed into Table \ref{FathMobTab}. 
\begin{table}[h]
\centering
\caption{Occupational Changes in a Generation, 1962}
\label{SocM}
\begin{tabular}{|c|c|c|c|}
\hline
Father's occupation &
\multicolumn{3}{|c|}{Respondent's occupation}\\
\cline{2-4}
&White-collar&Manual&Farm\\
\hline
White-collar	&6313	&2644&	132\\
\hline
Manual	&6321&	10883&	294\\
\hline
Farm&	2495&	6124&	2471	\\
\hline
\end{tabular}
\end{table}

\begin{table}[h]
\centering
\caption{Father's occupation vs Respondent's mobility. The MLEs are shown in parentheses.}
\label{FathMobTab}
\vspace{4mm}
\begin{tabular}{|c|c|c|c|}
\hline
Father's occupation &
\multicolumn{3}{|c|}{Respondent's mobility}\\
\cline{2-4}
&Upward&Immobile&Downward\\
\hline
White-collar	&-	&6313\, (7518.17)&2776\, (1570.83)\\
\hline
Manual	&6321\, (8823.66)&	10883\, (7175.18)&	294\, (1499.17)\\ 
\hline
Farm&	8619\, (6116.34)&	2471 \,(4973.66)&-\\ 
\hline
\end{tabular}
\end{table}

Since respondents cannot move up from the highest status or down from the lowest status, then the cells $(1,1)$ and $(3,3)$ in Table \ref{FathMobTab} do not exist. The set of cells $\mathcal{I}$ is a proper subset of the Cartesian product of the domains of the variables in the table. Let $\mathbf{S}$ be the class consisting of the cylindrical sets associated with the marginals, including the empty one. The relational model generated by $\mathbf{S}$ has the model matrix
$$
\mathbf{A}=\left[\begin{array}{ccccccc}
1 & 1 & 1 & 1 & 1 & 1 & 1\\
1 & 1 & 0 & 0 & 0 & 0 & 0\\
0 & 0 & 1 & 1 & 1 & 0 & 0\\
0 & 0 & 1 & 0 & 0 & 1 & 0\\
1 & 0 & 0 & 1 & 0 & 0 & 1\\
\end{array}\right]
$$
and is expressed in terms of local odds ratios as follows:
$$ \frac{p_{12}p_{23}}{p_{13}p_{22}} = 1, \,\,\, \frac{p_{21}p_{32}}{p_{22}p_{31}} = 1.$$
This model is a regular exponential family of order 4; the maximum likelihood estimates of cell frequencies exist and are  unique. (The estimates are shown in Table \ref{FathMobTab} next to the observed values.) The observed $X^2=6995.83$ on two degrees of freedom provides an evidence of strong association between father's occupation and respondent's mobility. \qed
\end{example}

The next example illustrates the usefulness of relational models for network analysis.  
\begin{example}\label{NetEx}

Table \ref{TradeEx} shows the total trade data between seven European countries that was collected from \cite{dataNet}. Every cell contains the value of trade volume for a pair of countries; cell counts are assumed to have Poisson distribution. The two  hypotheses of interest are: countries with larger economies generate more trade, and trade volume between two countries is higher if they use the same currency. In this example, GDP (gross domestic product) is is chosen as the characteristic of economy and Eurozone membership is chosen as the common currency indicator. The class  $\mathbf{S}$ includes five subsets of cells reflecting the GDP size:
$$\begin{array}{l}
\{GDP < 0.1 \cdot 10^6\, \mbox{ vs } \,GDP < 0.1 \cdot 10^6\},\\
 \{ GDP < 0.1 \cdot 10^6\, \mbox{ vs } \, 0.1 \cdot 10^6 \leq GDP < 0.6 \cdot 10^6\},\\ 
\{ GDP < 0.1 \cdot 10^6\, \mbox{ vs } \, GDP \geq 0.6 \cdot 10^6 \},\\ 
\{0.1 \cdot 10^6\leq GDP < 0.6 \cdot 10^6\, \mbox{ vs } \,0.1 \cdot 10^6 \leq GDP < 0.6 \cdot 10^6  \},\\ \{0.1 \cdot 10^6\leq GDP < 0.6 \cdot 10^6\, \mbox{ vs } \, GDP \geq 0.6 \cdot 10^6 \},
\end{array}$$
and three subsets reflecting Eurozone membership:
$$\begin{array}{l}
    \{\mbox{cells, showing trade between two Eurozone members }\},\\
     \{\mbox{cells, showing trade between a Eurozone member and a non-member }\},\\ 
    \{ \mbox{cells, showing trade between two Eurozone non-members}\}.
\end{array}$$
Under the model generated by $\mathbf{S}$, trade volume is the product of the GDP effect and the Eurozone membership effect.
 
\begin{table}[!h]
\centering
\caption{Total trade between seven countries (in billions US dollars). The MLEs are shown in parentheses.}
\label{TradeEx}
\vspace{5mm}
\begin{tabular}{|c|ccccccc|}
\hline
    &LV& NLD & FIN & EST & SWE &BEL & LUX\\
\hline
LV &[0]  &  0.7 (3.29)  & 1 (1.17) & 2 (2.0) & 1.3 (1.17)& 0.4 (1.17)  & 0.01 (0.01)\\   
NLD & - & [0] & 10 (17) & 1 (1.17) & 17 (15) & 102 (102) & 2.1 (2.29)\\  
FIN& - & -  & [0] & 4 (1.17) & 18 (15) & 4 (2.29) & 0.1 (2.29)\\  
EST & - & - & - & [0] &  2.6 (1.17) & 0.5 (1.17)  & 0.01 (0.01)\\  
SWE & - & - & - & -  & [0] &  15 (15)  & 0.35 (2.29)\\ 
BEL & - & - & -&  - &- &[0] &   9 (6.41) \\ 
LUX & - & - & -& -&- &- & [0] \\
\hline
\end{tabular}                  
\end{table}

This model is a regular exponential family of order 6. The maximum likelihood estimates for cell frequencies exist and are unique. The observed $X^2 = 20.16$ on 14 degrees of freedom yields the asymptotic p-value of $0.12$; so the model fits the trade data well. Alternatively, sensitivity of the model fit to other choices regarding GDP could also be studied. \qed
\end{example}

\section*{Acknowledgments}

The authors wish to thank Thomas Richardson for his comments and discussions.

The work of Anna Klimova and Adrian Dobra was supported in part by NIH Grant R01 HL092071. 

Tam\'{a}s Rudas was supported in part by Grant No TAMOP 4.2.1./B-09/1/KMR-2010-0003 from the European Union and the European Social Fund. He is also a Recurrent Visiting Professor at the Central European University and the moral support received is acknowledged.

\bibliographystyle{apacite}
\bibliography{RDK}

\end{document}